\newif\ifllncs
\let\accentvec\vec % to remove the warning about vec
\let\vec\accentvec %ditto
\definecolor{lightblue}{rgb}{0.5,0.5,1.0}
\definecolor{darkred}{rgb}{0.5,0,0}
\definecolor{darkgreen}{rgb}{0,0.5,0}
\definecolor{darkblue}{rgb}{0,0,0.5}
\spnewtheorem*{Proof}{Proof}{\itshape}{\rmfamily}
\renewenvironment{proof}{\begin{Proof}}{\qed\end{Proof}}
\theoremstyle{plain}
\newtheorem{theorem}{Theorem}
\theoremstyle{definition}
\newtheorem{example}{Example}
\newtheorem{definition}[theorem]{Definition}
\newcommand{\BK}[1]{\textcolor{blue}{BK: #1}}
\newcommand{\PS}[1]{\textcolor{violet}{[#1]}}
\newcommand{\SM}[1]{\textcolor{magenta}{SM: #1}}
\newcommand{\comment}[1]{#1}
\renewcommand{\BK}[1]{}
\renewcommand{\PS}[1]{}
\renewcommand{\SM}[1]{}
\renewcommand{\comment}[1]{}
\newcommand{\reals}{\mathbb{R}}
\newcommand{\nats}{\mathbb{N}}
\newcommand{\getz}{\mathbb{R}_{\geq 0}}
\newcommand{\vp}{\vee^{\pro}}
\newcommand{\vo}{\vee^{\opp}}
\newcommand{\wo}{\wedge^{\opp}}
\newcommand{\negate}[1]{\overline{#1}}
\DeclareMathOperator{\energy}{erg}
\DeclareMathOperator{\pb}{pb}
\DeclareMathOperator{\cost}{co}
\DeclareMathOperator{\sat}{sat}
\DeclareMathOperator{\pro}{p}
\DeclareMathOperator{\owner}{own}
\DeclareMathOperator{\opp}{o}
\DeclareMathOperator{\counter}{c}
\DeclareMathOperator{\avg}{avg}
\DeclareMathOperator{\rnk}{rank}
\DeclareMathOperator{\arity}{in}
\DeclareMathOperator{\type}{out}
\title{Quantitative Questions on Attack--Defense Trees}
\author{
Barbara Kordy, Sjouke Mauw and Patrick Schweitzer
}
\institute{
University of Luxembourg, SnT\\ %\vspace*{-0.9cm}
%%Interdisciplinary Centre for Security, Reliability and Trust\\
%%6, rue Richard Coudenhove-Kalergi,
%%L-1359 Luxembourg\\
{\tt \{barbara.kordy, sjouke.mauw, patrick.schweitzer\}@uni.lu}
}
\begin{document}

\maketitle

%\BK{ICISC version: submission Friday, September 7 
%(12 pages (except bib and appendices), reasonable margins)}\\
%
\BK{
%Comments from FPS reviewers to be taken into account for the next 
%version\\
\begin{enumerate}
\item \label{rel-work} 
Rewrite related work: an idea would be to decrease the number of 
mentioned papers/approaches, but clearly state how our solution addresses the 
problems mentioned in the mentioned ones and how it relates to them.
%\item \label{atrees} 
%(from SM) How about having a separated subsection or well marked paragraph
%explaining how what we describe in this paper relates to attack trees?
%\item 
%foundations of attack trees by Sjouke and Serial model for attack trees by 
%Willemson have both been published at ICISC (2005 and 2009)
%\item We should comment out with \% sign all slash comment comments, as they 
%give an extra 
%space :p
\end{enumerate}
}

\begin{abstract}
Attack--defense trees are a novel methodology for graphical security
modeling and assessment. The methodology includes visual, intuitive 
tree models whose analysis is supported by a rigorous mathematical 
formalism. Both, the intuitive and the formal components of the 
approach can be used for quantitative analysis of attack--defense scenarios. 
In practice, we use intuitive questions to ask about aspects of scenarios we 
are interested in. Formally, a computational procedure, defined with the help 
of attribute domains and a bottom-up algorithm, is applied to derive the 
corresponding numerical values. 

This paper bridges the gap between the intuitive and the formal way 
of quantitatively assessing attack--defense scenarios. 
We discuss how to properly specify a question, so that it can be answered 
unambiguously. Given a well specified question, we then show how to 
derive an appropriate attribute domain which 
constitutes the corresponding formal model.
Since any attack tree is in particular an attack--defense tree, our analysis 
is also an advancement of the attack tree methodology.
\end{abstract}

\begin{keywords}
attack trees, 
attack--defense trees, 
attributes, 
quantitative analysis,
quantitative questions.
\end{keywords}

%\comment{
%{\Huge Status}
%\begin{itemize}
%\item Section~\ref{sec:introduction} 
%\item Section~\ref{sec:attributes} 
%\item Section~\ref{sec:two-op-case}
%\item Section~\ref{sec:winning} 
%\item Section~\ref{sec:environmental}
%\item Section~\ref{sec:semantics_and_compatibility} 
%\item Section~\ref{sec:tool} 
%\item Section~\ref{sec:conclusion} 
%\end{itemize}
%}

\section{Introduction}
\label{sec:introduction}
%\BK{
%Setting up correct questions
%greatly increases the understanding of the attribute
%and thus the accuracy of obtained quantitative answers.}
%%
%\verbose{what would the reader miss if we would delete this whole
%paragraph?}{
In graphical security modeling, the main focus lies on the visual 
representation of a scenario. A common requirement of graphical models is 
their user friendliness, and hence their intuitiveness. However, intuitive 
models are prone to be ambiguous. While this in itself may already be 
undesirable, ambiguity is detrimental for computer supported processing. 
Contrary to intuitive models, formal frameworks prevent ambiguity and are 
able 
to support automated quantitative evaluation. A disadvantage of formal 
frameworks is however that they are, not seldom, more difficult to 
understand. 
%}
%\SM{Add a sentence concluding this observation and bridging to the
%next paragraph.}
%\BK{By combining intuitive and formal components in one model 
%... (or remove, as suggested by Sjouke)
%}

Attack--defense trees~\cite{KoMaRaSc2} form a systematic, graphical 
methodology for analysis of  attack--defense scenarios. 
They represent a game between an attacker, whose goal 
%it
is to attack a system, and a defender who tries to protect the system.
The widespread formalism of attack trees is a subclass of 
attack--defense 
trees, where only the actions of the attacker are considered.
The attack--defense tree methodology combines intuitive and formal components. 
On the one hand, the intuitive visual attack--defense tree representation is 
used in practice to answer qualitative and quantitative questions, such as 
``What are the minimal costs to protect a server?'', or ``Is the scenario 
satisfiable?'' On the other hand, there exist attack--defense terms and a 
precise mathematical framework for quantitative analysis using 
a recursive bottom-up procedure introduced for attack trees in~\cite{Schn} 
and extended to attack--defense trees in~\cite{KoMaRaSc}.

%User feedback obtained from the execution of 
Several case studies performed using the attack--defense tree methodology 
showed that there exist a significant discrepancy between users focusing on 
the intuitive 
components of the model and users working with the formal components. 
This is due to the fact that intuitive models are 
user friendly but often ambiguous. 
In contrast, formal models are rigorous and mathematically sound.
This, however, makes them difficult 
to understand for users without a formal background.
This discrepancy between the two worlds is especially visible 
in the case of quantitative analysis. 
Correct numerical evaluation can only be performed when 
all users have precise and consistent understanding of 
considered quantities also called attributes.

\noindent\emph{\textbf{Contributions.}} 
This work is an attempt to bridge the gap between the
intuitive and the formal components of the attack--defense tree methodology
for quantitative security analysis. Our goal is to provide 
a precise relation between intuitive questions and their formal 
models called attribute domains. 
We elaborate which kind of intuitive questions 
occurring in practical security analysis 
can be answered with the help of the bottom-up procedure on attack--defense 
trees. We empirically classify questions that were collected during case 
studies and literature reviews. 
We distinguish and formally analyze three different classes of 
questions: 
those referring to one player, those where answers for both players 
can be deduced from each other and those relating to an outside third 
party.  
For each class we provide detailed guidelines 
how the questions should be specified, so that they are unambiguous and can 
be answered correctly. 
Simultaneously, we discuss templates of the attribute domains 
corresponding to each class.

\noindent\emph{\textbf{Related work.}} 
%\PS{Lot's of work done on attack trees and attributes, but all papers had a 
%different focus.}
An excellent historical overview on graphical security modeling, starting 
from fault trees~\cite{VeGoRoHa}, over threat trees~\cite{Amor} and privilege 
graphs~\cite{DaDe} leading up to Schneier's attack trees~\cite{Schn},
was given by Pi\`{e}tre-Cambac\'{e}d\`{e}s and Bouissou in~\cite{PiBo}.
When Schneier  introduced the attack trees formalism in~\cite{Schn},
he proposed how to evaluate, amongst others, attack costs, success 
probability of an attack, and whether there is a need for special equipment. 
%\PS{These guys here take the formalism as their starting point.}
Since then, many authors have not only extended the attack tree 
formalism syntactically, but also followed in his footsteps and included the 
possibility of quantitative analysis in their extended formalisms. 
Baca and Petersen~\cite{BaPe}, for example, have extended attack trees to 
countermeasure graphs and quantitatively analyzed an open-source application 
development. Bistarelli~et~al.~\cite{BiDaPe}, 
Edge~et~al.~\cite{EdDaRaMi} and Roy~et~al.~\cite{RoKiTr2} have augmented 
attack trees with a notion of defense or mitigation nodes. They all analyze 
specific types of risk using particular risk formulas, adjusted 
to their models. Willemson and J\"{u}rgenson~\cite{WiJu} introduced an order 
on the leaves of attack trees to be able to optimize the 
computation of the expected outcome of the attacker.
%
%\PS{These guys here take the case study as their starting point.}
There also exist a number of case studies and experience reports that 
quantitatively analyze real-life systems. Notable examples are 
Henniger~et~al.~\cite{HeApFuRoRuWe}, who have conducted a study using attack 
trees for vehicular communications systems, Abdulla~et~al.~\cite{AbCeKa}, who 
analyzed the GSM radio network using attack jungles, and Tanu and 
Arreymbi~\cite{TaJo}, who 
assessed the security of mobile SCADA system for a tank and pump facility. 
Since all previously mentioned papers focus on specific attributes, they do 
not address the general problem of the relation between intuitive and formal 
quantitative analysis.
%
%\BK{Related work section is very dry: we should write something about 
%the results of those papers and how they relate to what we are doing
%(but we do not have time for this now)}
%
%\PS{These paper just use attribute on attack trees.}
%Moore~et~al.~\cite{Moore2001} include attacks related to social engineering 
%and physical entering of premises and Saini~et~al.~\cite{SaDuPa} show 
%examples from multiple systems. 

The formalism of attack--defense trees considered in this work 
was introduced by Kordy et al. in~\cite{KoMaRaSc}. 
Formal aspects of the attack--defense methodology have been discussed 
in~\cite{KoMaMeSc} and~\cite{KoPoSc}.
In~\cite{BaKoMeSc12}, Bagnato et al. provided guidelines for 
how to use attack--defense trees in practice. They analyzed a DoS attack 
scenario on an RFID-based goods management system by evaluating a number 
of relevant attributes, including cost, time, detectability, penalty,
required skill level, impact, difficulty and profitability.

\noindent\emph{\textbf{Paper structure.}} 
The necessary background concerning the attack--defense tree methodology 
is briefly explained in Section~\ref{sec:attributes}. The relation between 
intuitive and formal quantitative analysis of attack--defense scenarios is 
presented in~Section~\ref{sec:first_classification}. This section also 
introduces our classification of questions that can be answered on 
attack--defense trees with the help of a bottom-up procedure. The  
classification contains three classes of questions which are treated in 
Sections~\ref{sec:two-op-case},~\ref{sec:winning} and~\ref{sec:environmental}.
Section~\ref{sec:tool} presents a software tool, that has been developed to 
support quantitative analysis of attack--defense scenarios. 
Section~\ref{sec:conclusion} concludes the paper.
% and points out directions for future work. \PS{Saved one line.}

\section{Attack--Defense Scenarios Intuitively and Formally}
\label{sec:attributes}

\subsection{The Intuitive Model}
\label{sec:intuitive_model}

An \emph{attack--defense tree} (ADTree) 
constitutes an intuitive graphical model
%as introduced in~\cite{KoMaRaSc}, 
describing the measures an attacker might take 
in order to attack a system and the defenses that a defender can employ to 
protect the system. 
An ADTree is a node-labeled rooted tree having 
nodes of two opposite types: 
\emph{attack nodes} represented with circles and \emph{defense nodes}
represented with rectangles.
%A toy ADTree representing \BK{here say what is 
%the tree about} is given in~Figure~\ref{fig:main}. 
The root node of an ADTree depicts the main goal of one of the 
players. 
%\BK{\footnote{\BK{Preventive defender will start with a defense root, 
%aggressive defender (such as PS) will start with an attack node.}}} 
Each node of an ADTree may have one or more children of the
same type which \emph{refine} the node's goal into subgoals. 
The refinement relation is indicated by solid edges and can be
either disjunctive or conjunctive.
The goal of a \emph{disjunctively} refined node is achieved when \emph{at 
least one} of its children's goals is achieved. The goal of a 
\emph{conjunctively} refined node
is achieved when \emph{all} of its children's goals are achieved. 
To distinguish between the two refinements we indicate 
the conjunctive refinement with an arc. 
A node which does not have any children of the same type is called a 
\emph{non-refined} node. Non-refined nodes represent \emph{basic actions}, 
i.e., actions which can be easily understood and quantified.
Every node in an ADTree may also have one child of the opposite type, 
representing a \emph{countermeasure}. The countermeasure relation is indicated 
by dotted edges. Nodes representing countermeasures can again be refined into 
subgoals and countered by a node of the opposite type. 
\begin{example}\label{ex:main}
An example of an ADTree is given in~Figure~\ref{fig:main}. The root of the 
tree represents an attack on a server. Three ways to accomplish this
attack are depicted: insider attack, outsider attack (OA) and 
stealing the server (SS). 
To achieve his goal, an insider needs to be 
internally connected (IC) and have the correct user credentials 
(UC). To not be caught easily, an insider uses 
a colleague's and not his own credentials.
Attack by an outsider can be prevented if a properly configured firewall (FW) is 
installed.
\end{example}
\begin{figure}
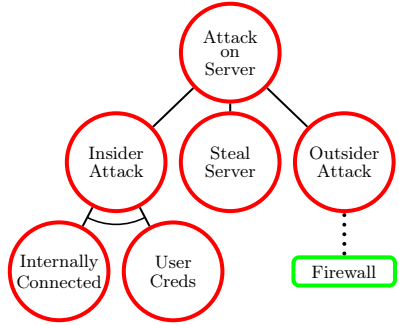

\centering
\settowidth{\pbs}{MIIII}
%\begin{minipage}[c]{4.2cm}
%\psscalebox{1}{
%\pstree[levelsep=35pt,treesep=0.2cm]{\NodeA{$\vp$}}{
  %\pstree{\NNodeA{A}{$\wp$}}{
    %\NNodeA{B}{$A$\\1}
    %\NNodeA{C}{$B$\\1}
    %}
  %\NodeA{$C$\\1}
  %\pstree{\NodeA{$D$\\1}}{
    %\NodeBC{$E$\\1}
    %}
  %}
%\Arca{0.6cm}{A}{B}{C}
%}
%\end{minipage}
%
%$\xrightarrow{\text{Values}}$
%
%\settowidth{\pbs}{MIIII}
%\begin{minipage}[c]{4.2cm}
%\psscalebox{1}{
%\pstree[levelsep=35pt,treesep=0.2cm]{\NodeA{$1$}}{
  %\pstree{\NNodeA{A}{$1$}}{
    %\NNodeA{B}{$1$}
    %\NNodeA{C}{$1$}
    %}
  %\NodeA{$1$}
  %\pstree{\NodeA{$0$}}{
    %\NodeBC{$1$}
    %}
  %}
%\Arca{0.6cm}{A}{B}{C}
%}
%\end{minipage}
%
%
%\fbox{
\begin{minipage}[c]{5.7cm}
\setlength{\pbs}{1.5cm}
\psscalebox{0.75}{
\pstree[levelsep=55pt,treesep=0.2cm]{\NodeA{Attack\\on\\Server}}{
  \pstree{\NNodeA{A}{Insider\\Attack}}{
    \NNodeA{B}{Internally\\Connected}
    \NNodeA{C}{User\\Creds}
    }
  \NodeA{Steal\\Server}
  \pstree{\NodeA{Outsider\\Attack}}{
    \NodeBC{Firewall}
    }
  }
\Arca{1.1cm}{A}{B}{C}
}
\end{minipage}
%}
% 
%
%\begin{minipage}[c]{12cm}
%\begin{center}
%\setlength{\pbs}{1.5cm}
%\psscalebox{1}{
%\pstree[levelsep=60pt,treesep=0.2cm]{\NodeA{Attack\\DB In\\ Cloud}}{
  %\pstree{\NNodeA{A}{Application}}{
    %\NNodeA{B}{Credentials}
    %\NNodeA{C}{Vulnerable\\Application}
    %}
  %\NodeA{Physically}
  %\NodeA{Database}
  %\pstree{\NodeA{Network}}{
    %\NodeBC{Firewall}
    %}
  %\pstree{\NodeA{Host}}{
    %\NodeBC{Patched\\OS}
    %}
  %}
%\Arca{1.1cm}{A}{B}{C}
%}
%\end{center}
%\end{minipage}
%
\caption{An ADTree for how to attack a server}
%Answering the question whether the scenario is satisfiable:
%assignment of values for basic actions (left) and deduction 
%of values for the remaining nodes (right).
%\SM{Mention that this will be explained in Section B.}}
\label{fig:main}
\end{figure}
Graphical visualization of potential attacks and possible countermeasures 
constitutes a first step towards a systematic security analysis. The next step 
is to assign numerical values to ADTree models, i.e., to perform a 
quantitative analysis.
%consists of assigning numerical values 
%to ADTree models. 
%This assessment influences budget allocation for protection and 
%identification of the most suitable defensive measures.
Intuitively speaking, performing a quantitative security analysis means 
\emph{answering questions related to specific aspects or properties 
influencing the security of a system or a company}. These questions may be of 
Boolean type, e.g., ``Is the attack satisfiable?'', or may concern physical or 
temporal aspects, e.g., ``What are the minimal costs of attacking a system?'',
or ``How long does it take to detect the attack?'' 
%Some questions 
%are easy to answer. For instance, given an ADTree, determining whether 
%we possess the right skills to attack or to protect the considered 
%system is rather intuitive.
%\verbose{not sure how the following helps the reader in better
%understanding the problem and appreciating our solution}{
%However, answering other questions may require 
%highly specific knowledge. For example, ``How probable is a given attack to 
%succeed?'', cannot be answered without knowing historical or statistical data 
%about similar attacks. Quantifying risks related to an attack--defense 
%scenario is even harder, because formulas expressing risk involve several 
%parameters, such as probability, costs and impact, which are difficult to 
%estimate.
%}
In order to facilitate and automate the analysis of vulnerability 
scenarios using ADTrees, the formal model of ADTerms and their quantitative 
analysis have been introduced. We briefly describe them in the next section.

%Before we specify a correct question, we introduce a formal model for ADTrees. 
%We then informally describe how the two models are related. The relation helps 
%us to deduce how to specify questions and how to transform them into the 
%formal model. More precisely in the remainder of the paper, we are concerned 
%with which questions for ADTrees can be formally modeled.

\subsection{The Formal Model}
\label{sec:formal_model}
In this section we recall formal definitions related to our methodology.
For more details and  explanatory examples we refer the reader 
to~\cite{KoMaRaSc2}.
To formally represent and analyze ADTrees, 
typed terms over a particular typed 
signature, called the AD--signature, have been introduced 
in~\cite{KoMaRaSc}. 
%The purpose of ADTrees is to model attack--defense scenarios. An
%attack--defense scenario is a game between two players:
%attacker and defender. 
To be able to capture ADTrees rooted in an  
attacker's node as well as those rooted in a defender's node, we distinguish 
between the \emph{proponent} (denoted by~$\pro$), which refers to the root 
player, and the \emph{opponent} (denoted by~$\opp$), which is the other 
player. 
For instance, for the ADTree in Figure~\ref{fig:main}, the proponent is the 
attacker and the opponent is the defender. Conversely, if the root of an 
ADTree is a defense node, the proponent is the defender and the opponent is 
the attacker. 
%Therefore, the root of an ADTree represents the main goal of the proponent. 
%When the root is an
%attack node, the proponent is an attacker and the opponent is a defender.
%Conversely, when the root is a defense node, the proponent is a defender and
%the opponent is an attacker. 

%In order to formally analyze attack--defense trees, attack--defense terms 
%were 
%introduced in~\cite{KoMaRaSc}. 
%They are typed terms over a particular typed 
%signature, called the AD--signature. To define the AD--signature, the notion 
%of an unranked function is used. An \emph{unranked function}~$F$ with 
%domain~$D$ and range~$R$ denotes a family of 
%functions~$(F_k)_{k\in\mathbb{N}}$, where~$F_k\colon D^k\to R$, for~$k>0$. 
Furthermore, given a set~$\mathcal{S}$, we denote by~$\mathcal{S}^*$ the set 
of all finite 
strings over~$\mathcal{S}$, and by~$\varepsilon$ the empty string. 
For $s\in\mathcal{S}$, we denote by $s^+$ a string composed of 
a finite number of symbols $s$.
\begin{definition}\label{AD--signature}
The \emph{AD--signature} is a pair~$\Sigma=(\mathcal{S},\mathcal{F})$, where
\begin{itemize}
\item $\mathcal{S} = \{\pro, \opp \}$ is a set of types, and 
\item $\mathcal{F}=\mathbb{B}^{\pro}\,\cup\, 
\mathbb{B}^{\opp}\,\cup\,\{\vee^{\pro},\wedge^{\pro},
\vee^{\opp}, \wedge^{\opp}, \counter^{\pro}, \counter^{\opp}\}$ is a set of 
function symbols,
such that the sets~$\mathbb{B}^{\pro}$, $\mathbb{B}^{\opp}$ 
and~$\{\vee^{\pro},\wedge^{\pro},
\vee^{\pro}, \wedge^{\opp}, \wedge^{\opp}, \counter^{\pro}, 
\counter^{\opp}\}$ are pairwise disjoint. 
%\item $\mathcal{F}=\{(\vee^{\pro}_k)_{k\in\mathbb{N}},
%(\wedge^{\pro}_k)_{k\in\mathbb{N}}, (\vee^{\opp}_k)_{k\in\mathbb{N}}, 
%(\wedge^{\opp}_k)_{k\in\mathbb{N}}, \counter^{\pro}, 
%\counter^{\opp}\}\,\cup\, 
%\mathbb{B}^{\pro}\,\cup\, \mathbb{B}^{\opp}$ is a set of function symbols,
%such that the sets~$\{(\vee^{\pro}_k)_{k\in\mathbb{N}}, 
%(\wedge^{\pro}_k)_{k\in\mathbb{N}},
%(\vee^{\opp}_k)_{k\in\mathbb{N}}, (\wedge^{\opp}_k)_{k\in\mathbb{N}}, 
%\counter^{\pro}, \counter^{\opp}\}$,~$\mathbb{B}^{\pro}$ and 
%$\mathbb{B}^{\opp}$ are pairwise disjoint. 
\end{itemize}
Every function symbol~$F\in\mathcal{F}$ is equipped with a mapping~$\rnk\colon
\mathcal{F}\to \mathcal{S}^*\times \mathcal{S}$, where 
$\rnk(F)$ is defined as a pair $(\arity(F),\type(F))$. The first 
component of the pair describes the type of the arguments of $F$ and the second 
component describes the type of the values of $F$. We have 
\begin{align*} 
&\rnk(b)=(\varepsilon,\pro), \text{\ for\ } b\in\mathbb{B}^{\pro}, && 
\rnk(b)=(\varepsilon,\opp), \text{\ for\ } b\in\mathbb{B}^{\opp},\\
&\rnk(\vee^{\pro})=(\pro^+,\pro), 
&&\rnk(\vee^{\opp})=(\opp^+,\opp), \\
&\rnk(\wedge^{\pro})=(\pro^+,\pro),
&&\rnk(\wedge^{\opp})=(\opp^+,\opp),\\
&\rnk(\counter^{\pro})=(\pro\opp,\pro),
&&\rnk(\counter^{\opp})=(\opp\pro,\opp).
\end{align*}
\end{definition}
Given 
$F\in\mathcal{F}$ and $s\in\mathcal{S}$, we say that 
$F$ is of type $s$, if $\type(F)=s$. 
The elements of~$\mathbb{B}^{\pro}$ and~$\mathbb{B}^{\opp}$ are typed 
constants, which represent basic actions of the proponent's 
and opponent's type, respectively. By~$\mathbb{B}$ we denote
the union~$\mathbb{B}^{\pro}\cup \mathbb{B}^{\opp}$. The 
functions\footnote{In fact, symbols $\vee^{\pro},\wedge^{\pro},\vee^{\opp}$,
and~$\wedge^{\opp}$ represent unranked functions, i.e.,
they stand for families of functions
$(\vee^{\pro}_k)_{k\in\mathbb{N}},
(\wedge^{\pro}_k)_{k\in\mathbb{N}}, (\vee^{\opp}_k)_{k\in\mathbb{N}}, 
(\wedge^{\opp}_k)_{k\in\mathbb{N}}$.
% see~\cite{KoMaRaSc2} for formal definition.
}
$\vee^{\pro},\wedge^{\pro},\vee^{\opp}$, and~$\wedge^{\opp}$ 
represent disjunctive and conjunctive refinement operators for the
proponent and the opponent, respectively. We set~$\negate{\pro}=\opp$ 
and~$\negate{\opp}= \pro$. The binary functions~$\counter^s$, for~$s\in 
\mathcal{S}$, represent countermeasures and are used to connect components of 
type~$s$ with components of the opposite type~$\negate{s}$.
%
%Every function symbol~$F\in\mathcal{F}$ is equipped with a mapping~$\rnk\colon
%\mathcal{F}\to \mathcal{S}^*\times \mathcal{S}$, called rank. The rank of a
%function symbol~$F$ is a pair~$\rnk(F)=(\arity(F),\type(F))$, where the first
%component describes the arity of~$F$ and the second specifies its type. For
%the function symbols from~$\mathcal{F}$ and~$k\in\mathbb{N}$, we define
%%
%\begin{align*} 
%&\rnk(b)=(\varepsilon,\pro), \text{\ for\ } b\in\mathbb{B}^{\pro}, && 
%\rnk(b)=(\varepsilon,\opp), \text{\ for\ } b\in\mathbb{B}^{\opp},\\
%&\rnk(\vee^{\pro}_k)=(\pro^k,\pro), 
%&&\rnk(\vee^{\opp}_k)=(\opp^k,\opp), \\
%&\rnk(\wedge^{\pro}_k)=(\pro^k,\pro),
%&&\rnk(\wedge^{\opp}_k)=(\opp^k,\opp),\\
%&\rnk(\counter^{\pro})=(\pro\opp,\pro),
%&&\rnk(\counter^{\opp})=(\opp\pro,\opp).
%\end{align*}
%%\end{definition}
%
%
\begin{definition}\label{de:adterm}
Typed ground terms over the AD--signature~$\Sigma$ are called attack--defense 
terms (ADTerms). 
The set of all ADTerms is denoted by~$\mathbb{T}_{\Sigma}$.
\end{definition}
For~$s\in\{\pro,\opp\}$, we denote by~$\mathbb{T}_\Sigma^{s}$ the set of all 
ADTerms
with the head symbol of type~$s$. We 
have~$\mathbb{T}_\Sigma=\mathbb{T}_{\Sigma}^{\pro}\cup
\mathbb{T}_{\Sigma}^{\opp}$. The elements of~$\mathbb{T}_\Sigma^{\pro}$ 
and~$\mathbb{T}_\Sigma^{\opp}$
are called \emph{ADTerms of the proponent's} and \emph{of the opponent's 
type}, respectively. The ADTerms of the proponent's type constitute formal
representations of ADTrees. 
\begin{example}
\label{ex:main_term}
Consider the ADTree given in Figure~\ref{fig:main}. The corresponding 
ADTerm  
is
\[t=\vee^{\pro}(\wedge^{\pro}(\text{IC},\text{UC}),\text{SS},\counter^{\pro}
(\text{OA},\text{FW})).\] 
The entire ADTerm, as well as its six 
subterms~$\wedge^{\pro}(\text{IC},\text{UC})$,~$\counter^{\pro}(\text{OA},
\text{FW})$,~IC,~UC,~SS,
and~OA, are of the proponent's type. Term $t$ also contains a subterm 
of the opponent's type, namely~FW.
\end{example}

In order to facilitate and automate quantitative analysis of vulnerability 
scenarios, the notion of an attribute
%\footnote{Other terms used in the literature to 
%describe the same concept are \emph{metric}, \emph{parameter}, \BK{any 
%other?}} 
%\PS{ An \emph{attribute} is any property, quality, or characteristic that can 
%be ascribed to a person, a thing, or a system.}
for ADTerms has been formalized in~\cite{KoMaRaSc}. 
An attribute expresses a particular property, quality, or characteristic of a 
scenario, such as the minimal costs of an attack or the expected impact of a 
defensive measure. 
A specific bottom-up procedure for evaluation of attribute values on ADTerms
ensures that the user, for instance a security expert, only needs to quantify 
the basic actions. From these, the value for the entire scenario is deduced 
automatically. Attributes are formally modeled using attribute domains.
\begin{definition}\label{def:attribute-domain}
An attribute domain for ADTerms is a tuple 
\[
A_{\alpha}=(D_{\alpha}, \vee^{\pro}_{\alpha}, \wedge^{\pro}_{\alpha},
\vee^{\opp}_{\alpha}, \wedge^{\opp}_{\alpha}, \counter^{\pro}_{\alpha}, 
\counter^{\opp}_{\alpha}),\]
where~$D_{\alpha}$ is a set of values and, for~$s\in\{\pro,\opp\}$,
\begin{itemize}
\item $\vee^{s}_{\alpha}$,~$\wedge^{s}_{\alpha}$ are unranked operations 
on~$D_{\alpha}$, 
\item $\counter^s$ are binary operations on~$D_{\alpha}$. 
\end{itemize}
\end{definition}

Let~$A_{\alpha}=(D_{\alpha}, \vee^{\pro}_{\alpha}, \wedge^{\pro}_{\alpha},
\vee^{\opp}_{\alpha}, \wedge^{\opp}_{\alpha}, \counter^{\pro}_{\alpha}, 
\counter^{\opp}_{\alpha})$ be an attribute domain for ADTerms. The bottom-up 
computation of attribute values on ADTerms is formalized as follows. 
First, a value from $D_{\alpha}$ is assigned to each basic action, 
with the help of function $\beta_{\alpha}\colon \mathbb{B}\to D_{\alpha}$, 
called a \emph{basic assignment}. 
Then, a recursively defined function~$\alpha\colon \mathbb{T}_\Sigma \to D_{\alpha}$ 
assigns a value to every ADTerm~$t$, as follows 
\begin{equation}\label{eq:alpha(v)}
\alpha(t)= \begin{cases} 
\beta_{\alpha}(t), &\text{if~$t\in \mathbb{B}$,} \\
\vee^s_{\alpha}(\alpha(t_1),\dots, \alpha(t_k)), 
&\text{if~$t=\vee^s(t_1,\dots,t_k)$},\\
\wedge^s_{\alpha}(\alpha(t_1),\dots, \alpha(t_k)), 
&\text{if~$t=\wedge^s(t_1,\dots,t_k)$},\\
\counter^s_{\alpha}(\alpha(t_1),\alpha(t_2)), 
&\text{if~$t=\counter^s(t_1,t_2)$},
\end{cases}
\end{equation}
where~$s\in\{\pro, \opp\}$ and~$k>0$. 

The example below illustrates the bottom-up procedure 
for an attribute called  \emph{satisfiability}.
\begin{example}\label{ex:sat}
The question ``Is the considered scenario satisfiable?'' 
is formally modeled using the satisfiability attribute. 
The corresponding attribute domain is %defined as
$A_{\sat}=(\{0,1\},\vee,\wedge, \vee,\wedge, 
\star,\star)$, where~$\star(x,y) =x\wedge\neg y$, for all~$x,y\in\{0,1\}$. 
The basic assignment~$\beta_{\sat}\colon \mathbb{B}\to \{0,1\}$ assigns the
value $1$ to every basic action which is 
satisfiable and the value $0$ to every basic action which is
not satisfiable. 
Using the recursive evaluation procedure defined by 
Equation~\eqref{eq:alpha(v)}, we evaluate the satisfiability attribute 
on the ADTerm from Example~\ref{ex:main_term}. Assuming that all basic actions
are satisfied, i.e., that 
$\beta_{\sat}(X)=1$  for $X\in\{\text{IC}, \text{UC},\text{SS},
\text{OA},\text{FW}\}$, 
we obtain
\begin{align*}
\sat&(\vee^{\pro}(\wedge^{\pro}(\text{IC},\text{UC}),\text{SS},\counter^{\pro}(\text{OA},\text{FW})))=\\
&\vee(\wedge(\beta_{\sat}(\text{IC}),\beta_{\sat}(\text{UC})),
\beta_{\sat}(\text{SS}),\star(\beta_{\sat}(\text{OA}),\beta_{\sat}(\text{FW})))=\\
&\vee(\wedge(1,1),1,\star(1,1))= \vee(1,1,0)=1.
\end{align*}
%\BK{explain the story of what calculation means on this example, 
%once we decide on the content of example}
%Of course, this computation results in the same value as the intuitive 
%bottom-up evaluation on the corresponding ADTree, presented in 
%Example~\ref{ex:sat_intuitively}.
\end{example}

The satisfiability attribute, as introduced in the previous example, 
allows us to define which player is the \emph{winner} of the considered 
attack--defense scenario. If the satisfiability value calculated for an 
ADTerm is equal to~$1$, the winner of the corresponding scenario
is the proponent, otherwise the winner is the opponent. 
In Example~\ref{ex:sat}, the root attack is satisfied, 
so the winner is the attacker. 

\section{Classification of Questions}
\label{sec:first_classification}

One of the goals of this paper is to describe how to correctly specify a 
question for an ADTree. This allows us to construct the corresponding formal 
model and deduce an answer using the bottom-up procedure. Let us motivate our 
approach with the following example. 
\begin{example}\label{eg:imprecise}
``What are the costs of 
the considered scenario?'' seems to be a valid question on an ADTree. 
However, this question is 
underspecified, because we do not know whether we should quantify the 
attacker's costs, the defender's costs or both. Clarifying this information is 
necessary to correctly define the corresponding basic assignment. We improve 
the question and ask ``What are the costs of the attacker?'' The new question 
is still underspecified, since it is not clear whether we are interested in 
the minimal, maximal, average or other costs. Making also this information 
explicit is necessary to correctly define the way how to aggregate the values 
for disjunctively refined nodes of the attacker. 
\end{example}

In this paper, we provide a pragmatic taxonomy of 
quantitative questions that can be asked about ADTrees. 
The presented classification results from case studies, 
e.g.,~\cite{BaKoMeSc12,EdDaRaMi,TaJo}, 
as well as from a detailed literature overview concerning quantitative 
analysis of security. Our study allowed us to identify three main classes 
of empirical questions, as described below. 

\medskip

\noindent\emph{\textbf{Class~1: Questions referring to one player.}}
Most of the typical questions for ADTrees have an explicit or implicit 
reference to one of the players which we call \emph{owner} of the question. 
This is motivated by the fact that the security model is usually analyzed from 
the point of view of one player only.
Examples of questions referring to one player are
``What are the \emph{minimal costs of the attacker}?'' (here the owner is the 
attacker) 
or
``How much does it \emph{cost to protect} the system?'' (here implicitly 
mentioned owner is the defender). 
When we ask a question of Class~$1$, we assume that 
its owner does not have extensive information concerning his adversary. 
Thus, we always consider the worst case scenario
with respect to the actions of the other player. 
%The owner of the question knows his own capabilities but does not have 
%extensive information concerning his adversary. 
%Thus, the owner of the question is able to quantify 
%his own actions precisely and he assumes the worst case scenario
%with respect to the actions of the other player. 
%Intuitive examples of questions referring to one player are
%``What are the \emph{minimal costs of the attacker}?'', 
%or
%``How much does it \emph{cost to protect} the system?''
%Given an ADTerm $t$ and a basic assignment, the bottom-up procedure 
%assigns a value to each subterm of $t$. In case of the attributes 
%corresponding to questions referring to a specific player, it is intended 
%that the value for each subterm quantifies the subsecnario 
%from the point of view of the owner of the corresponding question. 
%This means that 
Most of the questions usually asked for attack trees
can be adapted so that they can be answered on ADTrees as well. 
Thus,  questions related to attributes such as
\emph{attacker's/defender's costs}
\cite{Schn,BuLaPrSaWi,TaJo,BaPe,SaDuPa,MaOo,Yage,AbCeKa,RoKiTr2,ByFrMi,Program1,WaWhPhPa,EdDaRaMi}, 
\emph{attack/defense time}~\cite{HeApFuRoRuWe,Schn,WaWhPhPa}, 
\emph{attack detectability}~\cite{TaJo,ByFrMi}, 
\emph{attacker's special skill}~\cite{MaOo,AbCeKa,Schn}, 
\emph{difficulty of attack/protection}
\cite{ByFrMi,FuChWaLeTaAnLi,TaJo,HeApFuRoRuWe,MaOo,AbCeKa,Amor,WaWhPhPa}, 
\emph{penalty}~\cite{BuLaPrSaWi,JuWi3,WaWhPhPa}, 
\emph{impact of the 
attack}~\cite{Schn,TaJo,HeApFuRoRuWe,LiLiFeHe,SaDuPa,MaOo,Amor,AbCeKa,RoKiTr,EdDaRaMi,WaWhPhPa},
 \emph{attacker's profit}~\cite{Amor,JuWi3,BiDaPe,RoKiTr2}, etc., 
all belong to Class~$1$. We analyze 
questions of this class in~Section~\ref{sec:two-op-case}.
%\SM{The reader now asks himself the question how all these authors
%managed to evaluate their questions without this notion of an
%attribute domain. Did they have any problems that we now solve in this
%paper? Did they do something wrong?}

\medskip

\noindent\emph{\textbf{Class~2: 
Questions where answers for both players can be deduced from each other.}}
%The second class of empirical questions corresponds to 
%attributes for which the value for one player can be deduced 
%from the value of the other player. 
Exemplary questions belonging to Class~$2$ are
%the second class are 
``Is the \emph{scenario satisfiable}?'', or
``How \emph{probable is it that the scenario will succeed}?''.
We observe that if the scenario is satisfied for the attacker, then it is not 
satisfied for the defender, and vice versa. Similarly, 
knowing that one player succeeds with probability $p$, we also 
know that the other player succeeds with probability $1-p$. The foremost goal 
of attack trees and all their extensions is to represent whether attacks are 
possible. Thus, the \emph{satisfiability} attribute is considered, either 
explicitly or implicitly, in all works concerning attack trees and their 
extensions. As for \emph{probability}\footnote{
We would like to point out that, the probability attribute 
can only be evaluated using the bottom-up procedure given by 
Equation~\eqref{eq:alpha(v)}, 
if the ADTree does not contain any dependent actions.}, the attribute has been extensively 
studied 
in~\cite{Schn,BuLaPrSaWi,HeApFuRoRuWe,LiLiFeHe,MaThFe,Yage,AbCeKa,RoKiTr2,ByFrMi,EdDaRaMi,WaWhPhPa}.
We perform a detailed analysis of questions of Class~$2$ in 
Section~\ref{sec:winning}.

\medskip

\noindent\emph{\textbf{Class~3: Questions referring to an outside third 
party.}}
Questions belonging to Class~$3$ relate to a universal property which is 
influenced by actions of both the attacker and the defender. 
They quantify attack--defense scenarios from the point of view of 
an outside third party which is neither the attacker nor the defender. 
For instance, one could ask about ``How much \emph{data traffic is involved in 
the attack--defense scenario}?''. In this case, we do not need to distinguish 
between traffic resulting from the attacker's and the defender's actions,
as both players contribute to the total amount. Another example of a question of 
Class~$3$ is ``What is the \emph{global environmental impact} of the 
scenario?''. Instances of environmental impact could be 
$CO_2$ emission or water pollution.
%Here again we are not concerned with 
%which of the players is causing the damage only in the final resulting value. 
Attributes corresponding to questions in Class~$3$ have not been 
addressed in the attack tree literature, since attack trees focus
on a single player.
The importance of those questions becomes apparent when actions of two 
opposite parties are considered. The case study~\cite{BaKoMeSc12} that we 
have performed using the attack--defense tree methodology showed  
that such attributes relate to essential properties which should not be 
disregarded by the security assessment process. Questions of Class~$3$ are  
discussed in Section~\ref{sec:environmental}.

The following three sections set up guidelines for how to correctly specify 
quantitative questions of all three classes.
The guidelines' main purpose is to enable us to find a corresponding  
attribute domain in order to correctly compute an answer using the 
bottom-up 
procedure. Figure~\ref{fig:types} depicts the three classes of 
questions, as well as general templates for the corresponding attribute 
domains, as introduced in Definition~\ref{def:attribute-domain}. 
Symbols $\bullet, \circ,\diamond$ and $\overline{\bullet}$
serve as placeholders for specific operators. 
Corresponding symbols within a tuple indicate that 
the functions 
coincide. For instance, $(D, \circ,\bullet, \bullet,\circ, 
\bullet,\circ)$ means that 
%the functions $\vee^{\pro}_{\alpha}$, 
%$\wedge^{\opp}_{\alpha}$ and $\counter^{\opp}_{\alpha}$ 
$\vee^{\pro}_{\alpha}=\wedge^{\opp}_{\alpha}=\counter^{\opp}_{\alpha}$
%are equal 
and 
that $\wedge^{\pro}_{\alpha}=\vee^{\opp}_{\alpha}=\counter^{\pro}_{\alpha}$. 
We motivate these equalities and give possible instantiations of  
$\bullet, \circ,\diamond$ and $\overline{\bullet}$
in the following three sections.  \PS{Repetition the following three sections.}

\begin{figure}[htb]
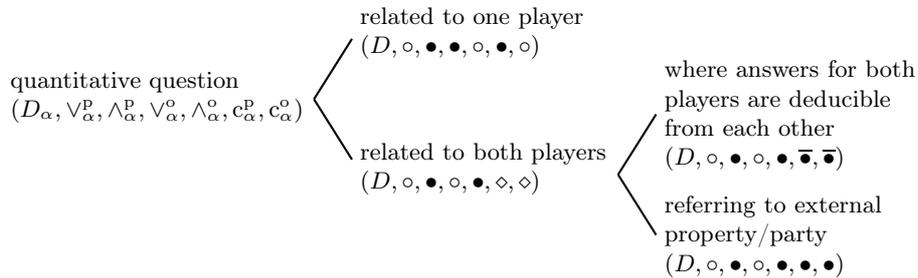

\centering
\vspace*{-0.5cm}
\hspace*{-0.3cm}
\pspicture*(-4.7,0.1)(7.6,3.8)
\put(-4.4, 2.45){
\parbox{3.9cm}{
quantitative question\newline
$(D_{\alpha}, \vee^{\pro}_{\alpha}, \wedge^{\pro}_{\alpha},
\vee^{\opp}_{\alpha}, \wedge^{\opp}_{\alpha}, \counter^{\pro}_{\alpha}, 
\counter^{\opp}_{\alpha})$
}
}

\psline(-0.3,2.5)(0.2,3.3)
\psline(-0.3,2.5)(0.2,1.7)
\put(0.2,3.3){
\parbox{3cm}{
%\hspace*{0.3cm}
related to one player
\newline
$(D, \circ,\bullet, \bullet,\circ, \bullet,\circ)$
}
}
\put(2.2,2){
\parbox{0.9cm}{
}
}
\put(0.2,1.5){
\parbox{3.3cm}{
related to both players\newline 
$(D, \circ,\bullet, \circ, 
\bullet,\diamond,\diamond)$
}
} 
\psline(3.7,1.5)(4.2,2.3)
\put(4.2,2.2){
\parbox{3.5cm}
{where answers for both\newline 
players are deducible\newline
from each other\newline
$(D, \circ,\bullet, \circ, 
\bullet,\overline{\bullet},\overline{\bullet})$
}
}
\psline(3.7,1.5)(4.2,0.7)
\put(4.2,0.6){
\parbox{3.1cm}
{referring to external\newline 
property/party\newline 
$(D, \circ,\bullet, \circ, 
\bullet,\bullet,\bullet)$}
}
\endpspicture

\vspace*{-0.4cm}
\caption{Classification of questions and  
attribute domains' templates}
\label{fig:types}
\end{figure}
%
%\begin{figure}[htb]
%\centering
%
%\pspicture*(-2.6,-0.3)(8,4.2)
%\put(-2.6, 2.45){
%\parbox{1.8cm}{
%$(D_{\alpha}, \vee^{\pro}_{\alpha}, \wedge^{\pro}_{\alpha},
%\vee^{\opp}_{\alpha}, \wedge^{\opp}_{\alpha}, \counter^{\pro}_{\alpha}, 
%\counter^{\opp}_{\alpha})$
%}
%}
%
%\psline(1.5,2.5)(2,3.5)
%\psline(1.5,2.5)(2,1.5)
%\put(2,3.5){
%\parbox{2.2cm}
%{$(D, \circ,\bullet, \bullet,\circ, \bullet,\circ)$}
%}
%\put(2.2,1.8){
%\parbox{0.9cm}{
%}
%}
%\put(2,1.3){
%\parbox{2.2cm}{$(D, \circ,\bullet, \circ, 
%\bullet,\diamond,\diamond)$
%}
%} 
%\psline(4.6,1.5)(5.1,2.5)
%\put(5.2,2.5){
%\parbox{2.4cm}
%{$(D, \circ,\bullet, \circ, 
%\bullet,\overline{\bullet},\overline{\bullet})$}
%}
%\psline(4.6,1.5)(5.1,0.5)
%\put(5.2,0.5){
%\parbox{2cm}
%{$(D, \circ,\bullet, \circ, 
%\bullet,\bullet,\bullet)$}
%}
%\endpspicture
%
%\caption{Templates of attribute domains for questions of three classes}
%\label{fig:templates}
%\end{figure}

\section{Questions Referring to One Player}
\label{sec:two-op-case}

%In this section, we treat questions that explicitly refer to only one player: 
%either the attacker or the defender. We first simplify the analysis by 
%removing unnecessary parts of the tree. Then we deduce a simple structure 
%%%that is mandatory for questions that relate to only one of the players. 

%\BK{
%0. example question\\
%1. explain that two operators suffice\\
%2. necessity of pruning (where to put it?)\\
%3. Attribute Domain template + 4 parts\\
%4. explanation that 
%if the attribute domain for a question owned by one player is
%$(D, \circ,\bullet, \bullet,\circ, \bullet,\circ)$ then 
%the attribute domain for the same question owned by the other player is 
%$(D, \bullet, \circ, \circ, \bullet, \circ, \bullet)$
%}

\subsection{Defining a Formal Model for Questions of Class 1}
\label{sec:reduction_class_1}

Questions belonging to Class~$1$ refer to exactly one player, which we call 
the question's \emph{owner}. 
As we explain below, in the attack--defense tree setting, only two 
situations occur for a question's owner:
either he needs to choose \emph{at least one} option or he needs to execute 
\emph{all} 
options. Therefore, two operators suffice to answer questions 
of Class~$1$ and the generic attribute domain
is of the form~$(D, \circ,\bullet, \bullet,\circ, \bullet,\circ)$.
Furthermore, if we change a question's owner, the attribute domain
changes from $(D, \circ,\bullet, \bullet,\circ, \bullet,\circ)$ 
into $(D, \bullet, \circ, \circ, \bullet, \circ, \bullet)$.

%The owner can be explicitly 
%mentioned in the question, e.g., ``What are the minimal costs of the 
%attacker?'' (here owner is the attacker), 
%or this information can be implicit, e.g., ``What are protection costs?'' (
%here the owner is the defender). 
We illustrate the construction of the formal model 
for Class~$1$
using the question ``What are the minimal costs of the attacker?'', 
where the owner is the attacker.  
In the case of Class~$1$, all values assigned to nodes and subtrees
express the property that we are interested in from the perspective of the 
question's owner. In the minimal costs example, this means that even 
subtrees rooted in defense nodes have to be quantified from the attacker's 
point of view, i.e., a value assigned to the root of a subtree 
expresses what is the minimal amount of money that the attacker needs to 
invest in order to be successful in the current subtree. 

%\verbose{I can see that this paragraph add to the conclusion in the
%next paragraph, but a reader has to be really motivated to read this
%paragraph until the end.}{
Subtrees rooted in uncountered attacker's nodes can either be disjunctively 
or conjunctively refined. In the first case the attacker needs to ensure that 
he is successful in \emph{at least one} of the refining nodes, in 
the second case he needs to be successful in \emph{all} refining nodes. The 
situation for subtrees rooted in uncountered defender's nodes is reciprocal.
If a defender's node is disjunctively refined, the attacker needs to 
successfully counteract \emph{all} possible defenses to ensure that he is 
successful at the subtree's root node; if the defender's node is conjunctively 
refined, successfully counteracting \emph{at least one} of the refining nodes 
already suffices for the attacker to be successful at the subtree's root 
node. 
%}

This reciprocality explains that two different operators suffice to quantify 
all possible uncountered trees: The operator that we use to combine attribute 
values for disjunctively refined nodes of one player is the same as the 
operator we use for conjunctively refined nodes of the other player.

%\verbose{this is also quite boring}{
Furthermore, the same two operators can also be used to quantify all remaining 
subtrees. If a subtree is rooted in a countered attacker's node, the
attacker needs 
to ensure that he is successful at the action represented by the root node 
\emph{and} that he successfully counteracts the existing defensive measure. 
Dually, for the attacker to be successful in a subtree rooted in a 
defender's countered node, it is sufficient to successfully overcome the defensive 
action \emph{or} to successfully perform the attack represented by the 
countering node.
%}
This implies that we can use the same operator as for 
conjunctively refined attacker's nodes in the first case and the same operator 
as for disjunctively refined attacker's nodes in the second case. 

%The above considerations show that, for a question's owner, only two
%situations occur:
%either he needs to choose \emph{at least one} option or he needs to execute 
%\emph{all} 
%options. Therefore, two operators suffice to answer questions 
%of Class~$1$ and the generic attribute domain
%is of the form~$(D, \circ,\bullet, \bullet,\circ, \bullet,\circ)$.
%Furthermore, if we change a question's owner, the attribute domain
%changes from $(D, \circ,\bullet, \bullet,\circ, \bullet,\circ)$ 
%into $(D, \bullet, \circ, \circ, \bullet, \circ, \bullet)$.

\subsection{Pruning}
\label{sec:pruning}
%
%In order to perform the bottom-up algorithm, we do not only need to specify a 
%question or a corresponding attribute domain, but we also need to
%provide a basic assignment. 
%
For attributes in Class~$1$, we are only interested in one player, 
the owner of a question. 
Therefore for this class, we should disregard 
subtrees that do not lead to a successful scenario for the owner. 
We achieve this with the help of the \emph{pruning} procedure 
illustrated in the following example.

\begin{example}
Consider the ADTree in Figure~\ref{fig:main} and assume that we are interested 
in calculating the minimal costs of the attacker. In this case, there is no 
need to consider the subtree rooted in ``Outsider Attack'', because it is 
countered by the defense ``Firewall'' and thus does not lead to a successful 
attack. The subtree rooted in ``Outsider Attack'' therefore should be removed. 
This simultaneously eliminates having to provide values for the non-refined 
nodes ``Outsider Attack'' and `Firewall''.
The computation of the minimal costs is then executed 
on the term corresponding to the tree in the
right of Figure~\ref{fig:for_pruning}. 
%We can then 
%evaluate the minimal costs on the simplified tree. 
%Two other cases when pruning is necessary are illustrated in 
%Figures~\ref{fig:for_pruning_2} and~\ref{fig:for_pruning_3}.
\end{example}
\begin{figure}[htb]
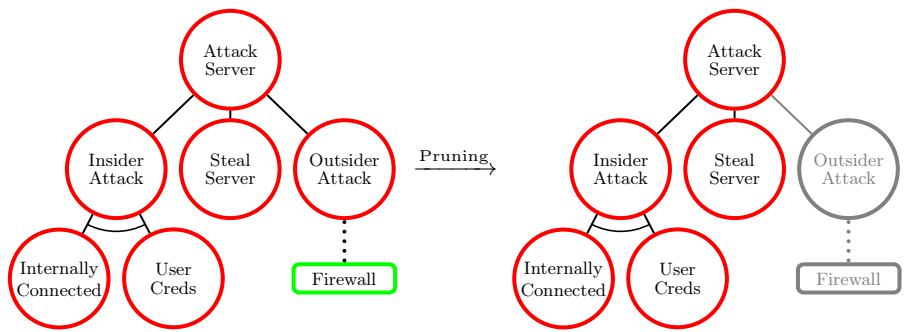

\centering
\begin{minipage}[c]{5.3cm}
\setlength{\pbs}{1.5cm}
\psscalebox{0.75}{
\pstree[levelsep=55pt,treesep=0.2cm]{\NodeA{Attack\\Server}}{
  \pstree{\NNodeA{A}{Insider\\Attack}}{
    \NNodeA{B}{Internally\\Connected}
    \NNodeA{C}{User\\Creds}
    }
  \NodeA{Steal\\Server}
  \pstree{\NodeA{Outsider\\Attack}}{
    \NodeBC{Firewall}
    }
  }
\Arca{1.1cm}{A}{B}{C}
}
\end{minipage}
$\xrightarrow{\text{Pruning}}$
\begin{minipage}[c]{5.3cm}
\setlength{\pbs}{1.5cm}
\psscalebox{0.75}{
\pstree[levelsep=55pt,treesep=0.2cm]{\NodeA{Attack\\Server}}{
  \pstree{\NNodeA{A}{Insider\\Attack}}{
    \NNodeA{B}{Internally\\Connected}
    \NNodeA{C}{User\\Creds}
    }
  \NodeA{Steal\\Server}
  \pstree{\NodeGAL{Outsider\\Attack}}{
    \NodeGBD{Firewall}
    }
  }
\Arca{1.1cm}{A}{B}{C}
}
\end{minipage}

\caption{Pruning the ``attack server'' scenario for 
questions of Class~$1$ owned by the attacker
}
\label{fig:for_pruning}
\end{figure}

To motivate the use of the pruning procedure, let us distinguish 
two situations. 
If a non-refined node of the non-owner is countered, its assigned 
value should not influence the result of the computation. 
If a non-owner's node is not countered, its value 
should indicate that the owner does not have a chance to successfully perform 
this subscenario. 
Mathematically, it means that 
the value assigned to the non-refined nodes of the non-owner 
needs to be neutral with respect to one operator and simultaneously absorbing 
with respect to the other. 
Since, in general, such an element may not exist, we use pruning 
to eliminate one of the described situations.  
which results in elimination of the absorption condition.

Below we explain how to intuitively prune an ADTree and how to model the 
pruning in a mathematical way. 

\medskip
\noindent\emph{\textbf{Pruning intuitively.}}
\label{para:intuitive_pruning}
Let us consider a question of Class~$1$ and its owner.
In order to graphically prune an ADTree, we perform the following procedure. 
Starting from a leaf of the non-owner, we traverse the tree towards 
the root until we reach the first node~$v$ satisfying one of the following 
conditions, as illustrated in 
Figures~\ref{fig:for_pruning_1},~\ref{fig:for_pruning_2},~\ref{fig:for_pruning_3},
and~\ref{fig:for_pruning_4}.

\begin{figure}[tbh]
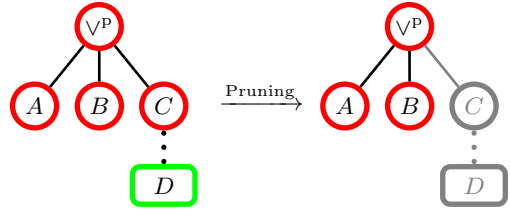

\centering

\settowidth{\pbs}{MII}
\begin{minipage}[c]{2.75cm}
\psscalebox{1}{
\pstree[levelsep=30pt,treesep=0.2cm]{\NodeA{$\vp$}}{
  \NodeA{$A$}
  \NodeA{$B$}
  \pstree{\NodeA{$C$}}{
    \NodeBC{$D$}
    }
  }
}
\end{minipage}
$\xrightarrow{\text{Pruning}}$
\begin{minipage}[c]{2.75cm}
\psscalebox{1}{
\pstree[levelsep=30pt,treesep=0.2cm]{\NodeA{$\vp$}}{
  \NodeA{$A$}
  \NodeA{$B$}
  \pstree{\NodeGAL{$C$}}{
    \NodeGBD{$D$}
    }
  }
}
\end{minipage}
\caption{Pruning a proper disjunctive refinement}
\label{fig:for_pruning_1}
\end{figure}

\begin{figure}[tbh]
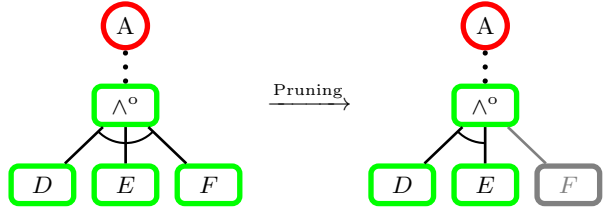

\centering

\settowidth{\pbs}{MII}
\begin{minipage}[c]{3.4cm}
\psscalebox{1}{
\pstree[levelsep=30pt,treesep=0.2cm]{\NodeA{A}}{
  \pstree{\NNodeBC{B}{$\wo$}}{
    \NNodeB{D}{$D$}
    \NNodeB{E}{$E$}
    \NNodeB{F}{$F$}
    }
  }
\Arca{0.5cm}{B}{D}{F}
}
\end{minipage}
$\xrightarrow{\text{Pruning}}$
\begin{minipage}[c]{3.4cm}
\psscalebox{1}{
\pstree[levelsep=30pt,treesep=0.2cm]{\NodeA{A}}{
  \pstree{\NNodeBC{B}{$\wo$}}{
    \NNodeB{D}{$D$}
    \NNodeB{E}{$E$}
    \NodeGBL{$F$}
    }
  }
\Arca{0.5cm}{B}{D}{E}
}
\end{minipage}

\caption{Pruning a proper conjunctive refinement}
\label{fig:for_pruning_2}
\end{figure}

\begin{figure}[tbh]
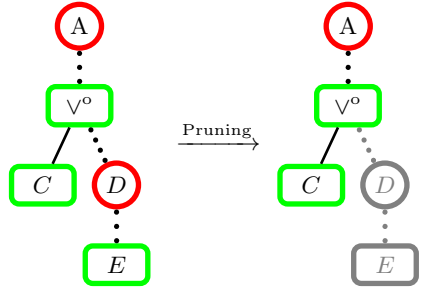

\centering

\settowidth{\pbs}{MII}
\begin{minipage}[c]{2.2cm}
\psscalebox{1}{
\pstree[levelsep=30pt,treesep=0.2cm]{\NodeA{A}}{
  \pstree{\NNodeBC{B}{$\vo$}}{
    \NNodeB{D}{$C$}
    \pstree{\NNodeAC{B}{$D$}}{
      \NodeBC{$E$}
      }
    }
  }
}
\end{minipage}
$\xrightarrow{\text{Pruning}}$
\begin{minipage}[c]{2.2cm}
\psscalebox{1}{
\pstree[levelsep=30pt,treesep=0.2cm]{\NodeA{A}}{
  \pstree{\NNodeBC{B}{$\vo$}}{
    \NodeB{$C$}
    \pstree{\NodeGAD{$D$}}{
      \NodeGBD{$E$}
      }
    }
  }
}
\end{minipage}

\caption{Pruning a countermeasure}
\label{fig:for_pruning_3}
\end{figure}

\begin{figure}[tbh]
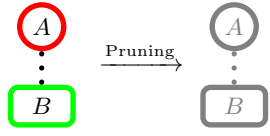

\centering

\settowidth{\pbs}{MII}
\begin{minipage}[c]{1.2cm}
\psscalebox{1}{
\pstree[levelsep=30pt,treesep=0.2cm]
			{\NNodeAC{B}{$A$}}{
      \NodeBC{$B$}
      }
    }
\end{minipage}
$\xrightarrow{\text{Pruning}}$
\begin{minipage}[c]{1.2cm}
\psscalebox{1}{
    \pstree[levelsep=30pt,treesep=0.2cm]{\NodeGAD{$A$}}{
      \NodeGBD{$B$}
      }
    }
\end{minipage}

\caption{Pruning an entire ADTree}
\label{fig:for_pruning_4}
\end{figure}
\begin{itemize}
\item $v$ is a node of the owner and part of a 
proper\footnote{A refinement is called proper if it contains at least two 
refining nodes.} 
disjunctive refinement (see Figure~\ref{fig:for_pruning});
\item $v$ is a node of the non-owner and part of a proper 
conjunctive refinement (see Figure~\ref{fig:for_pruning_2});
\item $v$ is a node of the owner that counteracts a refined node of the 
non-owner (see Figure~\ref{fig:for_pruning_3});
\item  $v$ is the root of the ADTree (see Figure~\ref{fig:for_pruning_4}).
\end{itemize}
The subtree rooted in node~$v$ is removed from the ADTree. The procedure is 
repeated, starting from all leaves of the non-owner. We note that 
the order in which 
we perform the procedure does not influence the final result. 
%\comment{\BK{If we have time, we could prove the last statement about the 
%order, in the non-owner the Appendix A}}
Also, in some cases the pruning procedure results in the removal of the 
entire ADTree. This is the case when the owner of the question 
does not have any way of successfully achieving his goal.

\medskip
\noindent\emph{\textbf{Pruning formally.}}
\label{sub:formal_pruning} 
Let $Q$  be a question~$Q$ of Class~$1$ and 
let~$\owner$ denotes the owner of $Q$. 
In order to model the 
pruning procedure in a mathematical way, we construct the formal model 
answering the question ``Can the 
owner of~$Q$ succeed in a considered attack--defense scenario?''
The idea is to assign the Boolean 
value~$1$ to all subtrees in which the owner of~$Q$ can succeed and the value 
$0$ to the subtrees in which he cannot. Formally, we evaluate an attribute 
that we denote by~$\sat_{\owner}$, defined as follows.  First we set the 
basic assignment 
\begin{equation}\label{eq:ba_pruning}
\beta_{\sat_{\owner}}(b)=
\begin{cases}
1 & \text{if\ } b\in\mathbb{B}^{\owner}\\
0 & \text{if\ } b\in\mathbb{B}^{\negate{\owner}}.\\
\end{cases}
\end{equation}
Then, given an ADTerm $t$,  we use the following attribute 
domain\footnote{Note that the question ``Can the owner of~$Q$ succeed in the 
scenario?''
also falls into Class~$1$, as it is referring to a specific player. 
This explains why the corresponding attribute domain 
%is constructed using only two different operators and 
conforms to the template deduced in 
Section~\ref{sec:reduction_class_1}.
} 
to derive the values of the attribute~$\sat_{\owner}$ at all subterms of~$t$:
\begin{equation}\label{eq:AD_sat_owner}
A_{\sat_{\owner}} =
\begin{cases}
(\{0,1\}, \vee,\wedge, \wedge, \vee, \wedge,\vee) & \text{if\ } \owner=\pro\\
(\{0,1\}, \wedge, \vee,\vee, \wedge,\vee, \wedge) & \text{if\ } \owner=\opp.
\end{cases}
\end{equation}

The following theorem shows that~$\sat_{\owner}$ models the pruning procedure 
soundly and correctly. 
\begin{theorem}\label{th:equiv_pruning}
Consider a question~$Q$ of Class~$1$, its owner~$\owner$, an ADTree~$T$ and 
the corresponding ADTerm~$t$. Furthermore, let~$A_{\sat_{\owner}}$ and 
$\beta_{\sat_{\owner}}$ be defined by equations~\eqref{eq:ba_pruning}
and~\eqref{eq:AD_sat_owner}. The intuitive pruning procedure presented in 
Section~\ref{sec:pruning} removes a subtree~$T'$ of~$T$ if and only 
if the evaluation of the~$\sat_{\owner}$ attribute on the corresponding 
subterm~$t'$ of~$t$ results in the value~$0$.
\end{theorem}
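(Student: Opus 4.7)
The plan is to prove the biconditional by structural induction on the subterm~$t'$, verifying that the attribute domain of Equation~\eqref{eq:AD_sat_owner} has been engineered to mirror each of the four pruning conditions of Section~\ref{sec:pruning}. Before beginning, I would fix the meaning of ``removes a subtree~$T'$'': the pruning procedure deletes every subtree whose root is contained in the subtree rooted at some node~$v$ triggered by one of the four conditions, so $T'$ is removed iff its root~$v'$ disappears from the pruned tree.

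The base case is when $t' = b$ is a basic action. If $b \in \mathbb{B}^{\owner}$, then $\beta_{\sat_{\owner}}(b) = 1$ and, since pruning only initiates from non-owner leaves, the single-node subtree is not by itself a candidate for removal. If $b \in \mathbb{B}^{\negate{\owner}}$, then $\beta_{\sat_{\owner}}(b) = 0$, and the pruning started at~$b$ climbs until it meets a first node~$v$ satisfying one of the four conditions; the root-of-$T$ clause guarantees that such a~$v$ always exists, so the subtree at~$v$, which contains~$b$, is removed.

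For the inductive step I would case-split on the head symbol of~$t'$, treating $\owner=\pro$ explicitly and inferring the $\owner=\opp$ case from the symmetry already built into~$A_{\sat_{\owner}}$. For $t' = \vee^{\pro}(t_1,\dots,t_k)$ the operator~$\vee$ yields $\sat_{\pro}(t')=0$ iff every child evaluates to~$0$; by the induction hypothesis every child subtree is removed, leaving the disjunctively refined owner's root without surviving options, and continuing the climb past~$v'$ hits the next trigger at or above it. For $t' = \wedge^{\pro}(t_1,\dots,t_k)$ the operator~$\wedge$ yields $\sat_{\pro}(t')=0$ iff some child evaluates to~$0$; the proper-conjunctive-refinement rule fires on any removed non-owner child, and for removed owner children the inductive hypothesis already yields a trigger at an ancestor. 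The opponent refinements $\vee^{\opp}$ and $\wedge^{\opp}$ are handled dually using $\wedge$ and $\vee$, and the countermeasure cases $\counter^{\pro}$, $\counter^{\opp}$ follow from the counter-pruning rule of Figure~\ref{fig:for_pruning_3} combined with the $\wedge$ and $\vee$ operators prescribed by~$A_{\sat_{\owner}}$.

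The main obstacle is reconciling the informal, leaf-initiated pruning procedure with the compositional evaluation of~$\sat_{\owner}$. Two points in particular require care: verifying that the pruning procedure is confluent, so that the set of removed subtrees is well-defined independently of the order in which non-owner leaves are processed; and checking that the ``proper refinement'' stipulation of the pruning rules (at least two refining children) does not desynchronize pruning from the Boolean operators, which are insensitive to arity. A convenient way to bypass both issues is to first reformulate the intuitive procedure as a local rewriting system whose rewrite rules are exactly the four pruning triggers, show by one-step inspection that a node is rewriteable iff its subterm has $\sat_{\owner}$-value~$0$, and then lift this local correspondence to the global biconditional via the induction sketched above.
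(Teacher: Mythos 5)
There is a genuine gap, and it starts with your fixed interpretation of ``removes a subtree~$T'$''. You take $T'$ to be removed iff its root disappears from the pruned tree, i.e.\ iff it is contained in the subtree rooted at some trigger node~$v$. Under that reading the theorem is false: in the paper's own running example (Figure~\ref{fig:for_pruning}), pruning for the attacker triggers at the node $\counter^{\pro}(\text{OA},\text{FW})$ and deletes the subtree rooted there, so the leaf $\text{OA}$ disappears --- yet $\sat_{\owner}(\text{OA})=\beta_{\sat_{\owner}}(\text{OA})=1$ because $\text{OA}$ is an owner's action. The paper's proof implicitly restricts the forward direction to subtrees rooted at nodes lying \emph{on a pruning path}, i.e.\ on the path from the initiating non-owner leaf~$u$ up to its trigger~$v$; only for those nodes does it claim the value~$0$. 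Your proof must adopt that narrower reading, and once it does, the property ``$t'$ lies on a pruning path'' is no longer a local function of the head symbol of $t'$ and the statuses of its children, which undermines the plan of a single structural induction establishing the biconditional at every subterm.

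The concrete failure point is your $\vee^{\pro}$ case. If every child of a proper owner disjunction evaluates to~$0$, then (by your own induction hypothesis) every child is a trigger of the first pruning condition, so every climb \emph{stops at the children}; it does not ``continue past $v'$'' as you assert --- stopping there is precisely what the first pruning rule prescribes. Hence no pruning path reaches $t'$, $t'$ is not removed by the procedure as described, and yet $\sat_{\owner}(t')=\vee(0,\dots,0)=0$: the equivalence does not propagate structurally through $\vee^{\pro}$. (Your $\wedge^{\pro}$ justification is also off: a $\wedge^{\pro}$ node has no non-owner children, so the proper-conjunctive-refinement rule cannot ``fire'' there; what actually matters is that a child of $\wedge^{\pro}$ is never a trigger, so paths pass through it.) This is exactly why the paper does not argue by head-symbol cases: its forward direction is a contraposition \emph{along a pruning path} --- the first node $w$ on the path with value~$1$ would have to combine a $0$-valued path-child via~$\vee$, and the three $\vee$-positions of $A_{\sat_{\owner}}$ are precisely the three stopping conditions --- while the converse is a separate structural induction over non-removed subtrees, using the observation that a removed child can only attach to a surviving parent through~$\vee$. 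On the positive side, you are right that confluence of the leaf-initiated procedure and the arity-sensitivity of ``proper'' refinements are gaps the paper glosses over, and recasting pruning as a local rewriting system is a reasonable way to address them; but in your proposal that reformulation is announced rather than carried out, and it would not by itself repair the two problems above.
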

\begin{proof}
We need to show that 
\begin{enumerate}
\item if a subtree is removed by pruning, the evaluation of~$\sat_{\owner}$ 
on the corresponding term results in~$0$. \label{case:removed_0}
\item if a subtree is not removed by pruning, evaluation of~$\sat_{\owner}$ 
on the corresponding term results in~$1$. \label{case:non-removed_1}
\end{enumerate}

\noindent
\ref{case:removed_0}) 
Let~$u$ be a leaf of the non-owner, from which we start the current step of 
the 
pruning procedure. We show that if a tree rooted in a node~$v$ is removed 
by pruning, then all subterms corresponding to subtrees rooted in the nodes on 
the path from~$u$ to~$v$ (including~$u$ and~$v$) evaluate to~$0$. 

We prove by contraposition. Assume that there exists a node~$w$ on the path 
between~$u$ and~$v$, such that the term corresponding to the tree rooted 
in~$w$ evaluates to~$1$. Moreover, let~$w$ be the first node with such 
property encountered when starting from~$u$. Note that~$w\not=u$, because the 
basic assignment~$\beta_{\sat_{\owner}}$ assigns the value~$0$ to every 
non-refined node of the non-owner. This means that there exists a node~$w_1$ 
which is a child of~$w$ lying on the path from~$u$ to~$v$. According to our 
assumptions, the term corresponding to the tree rooted in~$w$ evaluates to~$1$ 
while the term corresponding to the subtree rooted in~$w_1$ evaluates to~$0$. 
This implies that 
%in order to quantify the term corresponding to the tree rooted in~$w$, 
operator~$\vee$ has been used. According to the attribute domain given 
by~\eqref{eq:AD_sat_owner}, there are only three situations where the logical 
disjunction is used:
%between value~$0$ and another value could result in value~$1$:
\begin{itemize}
\item either~$w$ is a properly, disjunctively refined node of the owner;
\item or~$w$ is a properly, conjunctively refined node of the non-owner;
\item or~$w$ is a refined node of the non-owner and~$w_1$ is its 
countermeasure.
\end{itemize}
It is now sufficient to notice that in all the three cases, the pruning 
procedure should have had stopped at node~$w_1$. Contradiction. 

\noindent
\ref{case:non-removed_1})
First, let us remark that the pruning procedure stops at node~$v$ if the value 
of the term corresponding to the tree rooted in the parent node of~$v$ is not 
uniquely determined by the value of the subterm corresponding to the tree 
rooted in~$v$. This is because, in all three cases where pruning stops 
at~$v$, the calculation of the~$\sat_{\owner}$ attribute for the subterm 
corresponding to the tree rooted in the parent of~$v$ uses operator~$\vee$ 
which is applied to the value~$0$ (quantifying the term corresponding to the 
tree rooted in~$v$) and another value which cannot be deduced from the
currently considered path. The tree rooted in the parent of~$v$ will either be 
removed by the pruning procedure starting from another leaf of the non-owner 
or it will not be removed after all possible steps of the pruning are 
performed.

Let~$T$ be an ADTree and~$T'$ be its subtree which is not removed by any step 
of the pruning procedure. In the remaining part of this proof we show that 
the evaluation of~$\sat_{\owner}$ on a term~$t'$ corresponding to~$T'$ results 
in value~$1$. The proof is by induction on the structure of~$T'$.

If~$T'$ is a leaf of~$T$, then it needs to represent a basic action of the 
owner. This is because all leaves of the non-owner are removed by 
pruning. According to the basic assignment~$\beta_{\sat_{\owner}}$ the 
term~$t'$ is quantified with~$1$. 

Let us now consider a tree~$T'$ which has not been removed by any step of the 
pruning procedure and which is not a leaf of~$T$. As induction hypothesis, we 
assume that the evaluation of~$\sat_{\owner}$ on all subterms corresponding to 
the subtrees of~$T'$ not removed by pruning results in~$1$. This implies that 
the evaluation of~$\sat_{\owner}$ on~$t'$ yields~$1$, because the only 
possible ways of combining the values quantifying the subterms of the 
considered term are~$\vee$ or~$\wedge$.
\end{proof}

Next, we show how to combine the evaluation of an attribute from Class~$1$ 
with pruning, in one procedure. 

\subsection{Merging Evaluation of Attributes of Class 1 With Pruning}

We have argued that, in order to evaluate an attribute~$\alpha$ of Class~$1$ 
in a correct way, we first need to prune a considered ADTree with respect to 
the owner of the corresponding question. In this section, we show how the two 
procedures of attribute evaluation and pruning can be modeled using an 
extended attribute domain. 

Consider a question $Q$ of Class~$1$, the corresponding attribute domain
$A_{\alpha}=(D, \circ, \bullet, \bullet,\circ, \bullet, \circ)$ and 
a basic assignment $\beta_{\alpha}\colon \mathbb{B}\to D$. 
For ease of presentation, in this section we assume that the owner of $Q$ is 
the proponent, 
i.e., that $\circ$ is the \emph{at least one} operator and 
$\bullet$ is the \emph{all} operator. 
In order to be able to answer~$Q$ without the necessity of first 
pruning the ADTree, we extend~$D$ with an additional Boolean dimension that 
represents which actions are relevant for our considerations. Therefore, 
instead of the value domain~$D$, we are using the Cartesian 
product~$D\times\{0,1\}$ denoted by $\widehat{D}$. Furthermore, we 
define~$\widehat{\circ}$ and~$\widehat{\bullet}$ as two internal operations on 
$\widehat{D}$, by setting
\[ 
\widehat{\circ}((d_1,s_1),\dots,(d_k,s_k))= (\circ(d_1\otimes 
s_1,\dots,d_k\otimes s_k),\!\bigvee_{i=1}^{k}s_i) \]
and
\[ 
\widehat{\bullet}((d_1,s_1),\dots,(d_k,s_k))= (\bullet(d_1\otimes
s_1,\dots,d_k\otimes s_k),\!
\bigwedge_{i=1}^{k}s_i),\]
where, for all~$d\in D$, we set~$d\otimes 1=d$ and~$d\otimes 0=e_{\circ}$,
and where $e_{\circ}$ denotes the neutral element with respect to $\circ$. 
In order to define the 
extended basic assignment~$\widehat{\beta_{\alpha}}\colon \mathbb{B}\to 
D\times\{0,1\}$, we set~$\widehat{\beta_{\alpha}}(b)=(\beta_{\alpha}(b),1)$, 
for every basic action~$b$ of the owner of~$Q$, 
and~$\widehat{\beta_{\alpha}}(b)=(\beta_{\alpha}(b),0)$, for every basic 
action~$b$ of the non-owner of~$Q$. Theorem~\ref{th:eval+pruning} shows that 
the attribute domain defined by 
\[\widehat{A_{\alpha}}=
%\begin{cases}
(\widehat{D}, \widehat{\circ}, \widehat{\bullet}, 
\widehat{\bullet}, \widehat{\circ}, \widehat{\bullet}, \widehat{\circ}),
%&
%\text{if } \owner=\pro\\
%(\widehat{D}, \widehat{\bullet}, \widehat{\circ},
%\widehat{\circ}, \widehat{\bullet},  \widehat{\circ}, \widehat{\bullet})
%&
%\text{if } \owner=\opp.
%\end{cases}
\]
constitutes a formal model allowing us to correctly 
evaluate attribute~$\alpha$, and thus answer $Q$, without requiring any prior 
pruning. 
%\SM{How can we generalize this to general composition of attributes?}

\begin{theorem}\label{th:eval+pruning}
Let $Q$, $A_{\alpha}$, $\beta_{\alpha}$, $\widehat{A_{\alpha}}$ 
and $\widehat{\beta_{\alpha}}$
be as defined in this section. 
%Consider a question~$Q$ of Class~$1$, owned by the proponent, 
%the corresponding attribute domain 
%$A_{\alpha}$ and a basic assignment~$\beta_{\alpha}$.
%Furthermore, let 
%$\widehat{A_{\alpha}}$ and ~$\widehat{\beta_{\alpha}}$
%be defined as in the previous paragraph.
For every ADTerm~$t$, we have 
\begin{itemize}
\item $\widehat{\alpha}(t)=(d,0)$, for some~$d\in D$, 
if  the tree corresponding to~$t$ is 
removed by the pruning procedure related to~$Q$;
\item $\widehat{\alpha}(t)=
(\alpha(t),1)$, if  the tree corresponding to~$t$ is not 
removed by the pruning procedure related to~$Q$.
\end{itemize}
%pruning plus use of~$A_{\alpha}$ results in the same 
%result as the use of~$\widehat{A_{\alpha}}$.
\end{theorem}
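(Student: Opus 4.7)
The plan is to run a structural induction on the ADTerm $t$, but first to isolate the second (Boolean) component of $\widehat{\alpha}$ and identify it with the attribute $\sat_{\owner}$ introduced in Section~\ref{sec:pruning}. A direct comparison of operations shows that, under the assumption $\owner = \pro$, the six components of $\widehat{A_\alpha}$ pair the operator $\widehat{\circ}$ with $\vee^{\pro}, \wedge^{\opp}, \counter^{\opp}$ and pair $\widehat{\bullet}$ with $\wedge^{\pro}, \vee^{\opp}, \counter^{\pro}$; this is exactly the pattern of $A_{\sat_{\pro}}$ from Equation~\eqref{eq:AD_sat_owner} after replacing $\widehat{\circ}$ by $\vee$ and $\widehat{\bullet}$ by $\wedge$ on the second coordinate. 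Since moreover $\widehat{\beta_\alpha}(b)$ has second coordinate $1$ exactly for $b\in \mathbb{B}^{\pro}$, the second component of $\widehat{\alpha}(t)$ equals $\sat_{\owner}(t)$. Invoking Theorem~\ref{th:equiv_pruning} immediately settles that the second coordinate is $0$ iff the tree corresponding to $t$ is pruned, which is the first bullet and the ``$1$'' part of the second bullet.

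What remains is to show that, whenever the second coordinate equals $1$, the first coordinate equals $\alpha(t)$, where $\alpha$ is evaluated on the pruned term. I would prove this by induction on the structure of $t$. In the base case, a non-pruned $t$ is necessarily a basic action of the owner (all non-owner leaves are removed by pruning), hence $\widehat{\beta_\alpha}(t)=(\beta_\alpha(t),1)=(\alpha(t),1)$. For the inductive step, write $t=F(t_1,\dots,t_k)$ and let $\widehat{\alpha}(t_i)=(d_i,s_i)$; by the induction hypothesis, $d_i=\alpha(t_i)$ whenever $s_i=1$.

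The key case split is on whether the operator in $\widehat{A_\alpha}$ corresponding to $F$ is $\widehat{\bullet}$ or $\widehat{\circ}$. If it is $\widehat{\bullet}$, then $t$ being non-pruned forces $\bigwedge s_i=1$, so \emph{all} $s_i=1$ and no children are pruned; every $d_i\otimes s_i$ collapses to $d_i=\alpha(t_i)$, and $\widehat{\alpha}(t)=(\bullet(\alpha(t_1),\dots,\alpha(t_k)),1)=(\alpha(t),1)$. If it is $\widehat{\circ}$, then $t$ being non-pruned only guarantees $\bigvee s_i=1$, so some children may be pruned; for those, $d_i\otimes s_i = e_{\circ}$, and these entries are absorbed into the $\circ$-combination by the defining property of the neutral element. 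What remains is $\circ$ applied to the values $\alpha(t_i)$ for the non-pruned children, which by construction is the value of $\alpha$ on the pruned term corresponding to $t$.

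The symmetric case $\owner=\opp$ follows by the same argument after swapping the two attribute-domain templates. The main obstacle I anticipate is making the identification of the second coordinate with $\sat_{\owner}$ fully rigorous despite the many symbols in play, and justifying that the $\otimes$-substitution together with the neutrality of $e_{\circ}$ faithfully implements ``restrict the computation to the pruned tree''; the latter uses crucially that a child with $s_i=0$ can occur under a parent with $s=1$ only when the parent's operation is $\widehat{\circ}$, which is exactly what makes the $e_{\circ}$-substitution harmless.
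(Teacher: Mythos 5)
Your proposal is correct and follows the same overall strategy as the paper: identify the second coordinate of $\widehat{\alpha}$ with the attribute $\sat_{\owner}$, invoke Theorem~\ref{th:equiv_pruning} to dispose of the Boolean coordinate, and then argue by structural induction that the first coordinate of a non-pruned term equals $\alpha$ of (the pruned version of) that term. Where you differ is in the inductive step for the first coordinate, and there your argument is actually the more careful one. The paper disposes of this step by asserting that for a non-pruned $t$ the evaluation of $\sat_{\owner}$ on \emph{all} subterms of $t$ yields $1$, and then only displays the computation $\widehat{\diamond}((d_1,1),\dots,(d_k,1))=(\diamond(d_1,\dots,d_k),1)$. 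That assertion is false as stated---a non-pruned node can have pruned children, e.g.\ the root of Figure~\ref{fig:for_pruning} with its child $\counter^{\pro}(\text{OA},\text{FW})$, and indeed the worked example right after the theorem exhibits exactly the mixed call $\widehat{\min}((300,1),(400,1),(+\infty,0))$ that the paper's displayed equation does not cover. Your case split on $\widehat{\bullet}$ versus $\widehat{\circ}$ handles this properly: under $\widehat{\bullet}$ non-pruning forces every $s_i=1$, while under $\widehat{\circ}$ the pruned children contribute $d_i\otimes 0=e_{\circ}$ and are absorbed by neutrality, which is precisely the point of the $\otimes$ construction and the part of the argument the paper leaves implicit. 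The one thing worth making explicit in your write-up is that ``$\alpha(t)$'' in the second bullet must be read as $\alpha$ evaluated on the pruned term, which your $\widehat{\circ}$ case already uses implicitly.
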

\begin{proof}
First observe that the calculation of the second component of the 
pair~$\widehat{\alpha}(t)$ corresponds to the calculation of 
attribute~$\sat_{\owner}$ formalized in Section~\ref{sec:pruning}. 
Theorem~\ref{th:equiv_pruning} ensures that the second component of 
$\widehat{\alpha}(t)$ is~$0$ if and only if the corresponding ADTerm 
is removed by the pruning procedure related to~$Q$. 
In this case, the first component of~$\widehat{\alpha}(t)$
does not have any conclusive meaning. This corresponds to the fact that 
answering the question~$Q$ for the pruned subtrees of an ADTree does not make 
any sense, since these subtrees do not contribute to the success
of the owner of~$Q$. 

Let~$t$ be a term corresponding to a subtree
which is not removed by pruning related to~$Q$. 
In order to prove that~$\widehat{\alpha}(t)=(\alpha(t),1)$, 
it is sufficient to notice that,
according to Theorem~\ref{th:equiv_pruning}, the evaluation 
of~$\sat_{\owner}$ on all subterms of~$t$ results in the value~$1$. 
This means that, when calculating~$\widehat{\alpha}(t)$
we perform operations of the form
\begin{align*}
\widehat{\diamond}((d_1,1),\dots,(d_k,1))& =(\diamond(d_1\otimes 
1,\dots,d_k\otimes 1), 1)\\
& =(\diamond(d_1,\dots,d_k),1), 
\end{align*}
where~$\diamond\in\{\circ,\bullet\}$. This obviously leads to the desired 
result. 
\end{proof}

We illustrate the use of the extended attribute domain introduced in this 
section on the following example. 
\begin{example}
As in Example~\ref{ex:cost_pruned}, we would like to answer 
the question 
``What are the minimal costs of the proponent, assuming 
that reusing tools is infeasible?'',
on the tree in the left of Figure~\ref{fig:for_pruning}.
From Example~\ref{ex:cost_pruned},  we know that the corresponding 
attribute domain is 
$A_{\cost}=(\mathbb{R}, \min, +, +,\min, +,\min)$.
We extend $A_{\cost}$ to the attribute domain
$\widehat{A_{\cost}}=
(\widehat{\mathbb{R}}, \widehat{\min}, \widehat{+}, 
\widehat{+},\widehat{\min}, \widehat{+},\widehat{\min})$,
 as defined in this section. 
Since $+\infty$ is the neutral element with respect to $\min$ on $\reals$, 
operation $\otimes$ is defined 
as $x\otimes 0=+\infty$, for every $x\in\reals$. 
We evaluate the minimal costs attribute
on the ADTerm corresponding to the non-pruned ADTree from 
Figure~\ref{fig:for_pruning},
as follows:
\begin{align*}
&\widehat{\cost}(\vee^{\pro}(\wedge^{\pro}(\text{IC},\text{UC}),\text{SS},
\counter^{\pro}(\text{OA},\text{FW})))=\\
&\widehat{\min}(\widehat{+}(\widehat{\beta_{\cost}}(\text{IC}),
(\widehat{\beta_{\cost}}(\text{UC}))),\widehat{\beta_{\cost}}(\text{SS}),
\widehat{+}(\widehat{\beta_{\cost}}(\text{OA}),\widehat{\beta_{\cost}}(\text{FW})))=\\
&\widehat{\min}(\widehat{+}((\beta_{\cost}(\text{IC}),1),
(\beta_{\cost}(\text{UC}),1)),(\beta_{\cost}(\text{SS}),1),
\widehat{+}((\beta_{\cost}(\text{OA}),1),(\beta_{\cost}(\text{FW}),0)))=\\
&\widehat{\min}(
(+(\beta_{\cost}(\text{IC}),\beta_{\cost}(\text{UC})),1),
(\beta_{\cost}(\text{SS}),1),
(+(\beta_{\cost}(\text{OA}),+\infty),0))
=\\
&\widehat{\min}(
(+(100\text{\officialeuro},200\text{\officialeuro}),1),
(400\text{\officialeuro},1),
(+\infty,0))
=\\
&\widehat{\min}(
(300\text{\officialeuro},1),
(400\text{\officialeuro},1),
(+\infty,0)
)=\\
&(\min\{
300\text{\officialeuro},
400\text{\officialeuro},
+\infty
\},1)=\\
&(
300\text{\officialeuro},1).
\end{align*}
This result shows that the scenario is satisfiable for the proponent 
and that his minimal costs are $300\text{\officialeuro}$. 
It is the same as the result obtained in 
Example~\ref{ex:cost_pruned}.
\end{example}

\subsection{From a Question to an Attribute Domain}
\label{sec:q_to_ad_class_1}

In this section we analyze how a question of Class~$1$ should
look like, in order to be able to instantiate the attribute domain template
$A=(D,\circ,\bullet, \bullet,\circ, \bullet,\circ)$ 
with specific value set and operators. 
To correctly instantiate $A$, we need a value domain $D$, two operators 
(for \emph{all} and \emph{at least one}) and we need to know 
which of those operators instantiates  $\circ$ and which $\bullet$. Thus, a well 
specified question of Class~$1$ contains exactly four parts, as illustrated  
on the following question:

\medskip
%\begin{enumerate}
%\item What are the minimal \label{en:min}
%\item costs \label{en:cost}
%\item of the proponent \label{en:prop}
%\item assuming that all actions are executed one after another. \label{en:seq}
%\end{enumerate}
\begin{tabular}{lll}
\emph{Modality:} & & What are the minimal\\
\emph{Notion:} & & costs \\
\emph{Owner:} & & of the proponent \\
\emph{Execution:} & & assuming that all actions are executed one after 
another? 
\end{tabular}

\medskip
\noindent
Each of the four parts  has a specific purpose in determining the attribute 
domain. 

\noindent\emph{\textbf{Notion.}} The notion used by the question influences
the choice of the value domain. 
%The value domain specifies the property, quality or characteristic we are 
%interested in. 
The notions in Class~$1$, identified during our study, are:
\begin{center}
\begin{minipage}[t]{3.9cm}
\begin{itemize}
\item attack potential,
\item attack time,
\item consequence,
\item costs,
\item detectability,
\item difficulty level,
\item elapsed time,
\end{itemize}
\end{minipage} 
\begin{minipage}[t]{3.9cm}
\begin{itemize}
\item impact,
\item insider required,
\item mitigation success,
\item outcome,
\item penalty,
\item profit,
\item response time,
\end{itemize}
\end{minipage}
\begin{minipage}[t]{3.9cm}
\begin{itemize}
\item resources,
\item severity,
\item skill level,
\item special equipment\newline needed,
\item special skill needed,
\item survivability.
\end{itemize}
\end{minipage}
\end{center}
\noindent From the notion we determine the value domain, e.g.,
$\nats$, $\reals$, $\getz$, etc. The choice of the value domain influences 
the basic assignments, as well as the operators determined by the 
modality and the execution style. The selected value domain 
needs to include all values that we want to use 
to quantify the owner's actions. It also must contain a neutral element 
with respect to 
$\circ$, if $\owner=\pro$, and with respect to $\bullet$, if $\owner=\opp$. 
This neutral element is assigned to all 
non-refined nodes of the non-owner, as argued in 
Section~\ref{sec:pruning}.

\noindent\emph{\textbf{Modality.}} The modality of a question clarifies 
how options are treated. Thus, it determines the characteristic of the
\emph{at 
least one} operator. Different notions are accompanied with different 
modalities. In the case of costs, interesting modalities are 
minimal, maximal and average. 

\noindent\emph{\textbf{Execution.}} The question also needs to 
specify an execution style. Its value determines the treatment when all 
actions need to be executed. Thus, it describes the characteristic of the
\emph{all} operator. Exemplary execution styles are: 
 simultaneously/sequentially (for time) or with 
 reuse/without reuse (for resources). 

\noindent\emph{\textbf{Owner.}} The owner of a question determines how 
the modality and the execution are mapped to~$\circ$ 
and~$\bullet$. In case the owner of the question is the root 
player, i.e., the proponent,~$\circ$ is instantiated with the \emph{at least one} operator 
and~$\bullet$ with the \emph{all} operator. In case the root player is not the 
owner, the instantiations are reciprocal. 

Given all four parts, we can then construct the appropriate attribute domain. 
For the notion of continuous time, also called duration, possible 
combinations of the modality, the execution style and the 
owner have been determined in Table~\ref{tab:determination_structure}. 
We instantiate the attribute domain template 
$(D,\circ,\bullet,\bullet,\circ,\bullet,\circ)$  with 
the elements of the algebraic structure $(D,\circ,\bullet)$, and use the 
value indicated in the last column of the table as the 
basic assignment for all non-refined nodes of the non-owner. The table can be 
used in the case of other notions as well, as shown in the next example. 
%
%There are no further requirements to the value domain~$D$. However its 
%concrete instantiation might have an influence on the operators deduced in 
%the 
%previous two parts. E.g., when the question asks for minimal costs of the 
%owner without reuse of materials, and the considered domain is~$\mathbb{R}$, 
%the operator used for non-options, should be~$+$, if the domain 
%is~$\text{cheap}$,~$\text{average}$ or~$\text{expensive}$, the corresponding 
%operator should be~$\max.$
%
\begin{table*}[tbh]
\centering
\begin{tabular}{|p{0.5cm}|p{1.3cm}|p{1.4cm}|p{1.0cm}|p{1.6cm}|
p{2.8cm}|p{3.5cm}|}
\hline  & Notion & Modality & Owner & Execution 
& Structure $(D,\circ,\bullet)$ & 
Basic assignment for~$\negate\owner$ %& Compatible\newline Semantics 
\\ 
\hline 1 & duration &  min & $\pro$ & sequential & $(\reals,\min,+)$ 
&   $+\infty$ 
\\ 
\hline 2 & duration &  avg & $\pro$ & sequential & $(\reals,\avg, +)$ 
&   $e_{\avg}$   \\
\hline 3 & duration &  max & $\pro$ & sequential & $(\reals,\max,+)$ 
&   $-\infty$  \\
%\hline 4 & duration &  comb & $\pro$ & sequential & $(\reals,+, +)$ 
%&   $0$   \\
\hline 4 & duration &  min & $\opp$ & sequential & $(\reals,+,\min)$ 
&   $0$   \\
\hline 5 & duration &  avg & $\opp$ & sequential & $(\reals,+,\avg)$ 
&   $0$  \\
\hline 6 & duration &  max & $\opp$ & sequential & $(\reals,+,\max)$ 
&   $0$  \\
%\hline 8 & duration &  comb & $\opp$ & sequential & $(\reals,+,+)$ 
%&   $0$   \\
\hline 7 & duration &  min & $\pro$ & parallel & $(\reals,\min,\max)$ 
&   $+\infty$  \\
\hline 8 & duration &  avg & $\pro$ & parallel & $(\reals,\avg,\max)$ 
&   $e_{\avg}$   \\
\hline 9 & duration &  max & $\pro$ & parallel & $(\reals,\max,\max)$ 
&   $-\infty$   \\
%\hline 12 & duration &  comb & $\pro$ & parallel & $(\reals,\max,\max)$ 
%&   $-\infty$   \\
\hline 10 & duration &  min & $\opp$ & parallel & $(\reals,\max,\min)$ 
&   $-\infty$  \\
\hline 11 & duration &  avg & $\opp$ & parallel & $(\reals,\max,\avg)$ 
&   $-\infty$   \\
\hline 12 & duration &  max & $\opp$ & parallel & $(\reals,\max,\max)$ 
&   $-\infty$\\
%\hline 16 & duration &  comb & $\opp$ & parallel & $(\reals,\max,\max)$ 
%&   $-\infty$   \\
\hline 
\end{tabular} 
\vspace*{0.5cm}
\caption{Determining instantiation of the structure in Class~$1$, where 
$e_{\avg}$ denotes the neutral element with respect to $\avg$.}
\vspace*{-0.5cm}
\label{tab:determination_structure}
\end{table*}
\begin{example}\label{ex:cost_pruned}
The question ``What are the minimal costs of the proponent, as\-suming that 
reusing tools is infeasible?'' can be answered using the attribute domain 
$A_{\cost}=(\reals, \min, +, +,\min, +,\min)$. Here the notion is 
\emph{cost}, which has the same value domain as duration, i.e., 
$\reals$. The modality is 
\emph{minimum}, the owner is \emph{the proponent} and the execution style is 
\emph{without reuse}, which corresponds to sequential. Hence, we use the 
structure   $(\reals, \min, +)$, as specified in Line~$1$ of 
Table~\ref{tab:determination_structure}. In order to answer the 
question on the tree in the left of Figure~\ref{fig:for_pruning}, we first 
prune it, as shown on the right of Figure~\ref{fig:for_pruning}. The only 
basic actions that are left are ``Internally connected'', ``User Creds'' and 
``Steal Server''. Suppose the costs are 100\officialeuro, 200\officialeuro, 
and 400\officialeuro, respectively. We use those values as basic assignment 
$\beta_{\cost}$ and apply the bottom-up computation to the 
ADTerm~$\vee^{\pro} (\wedge^{\pro}(\text{IC},\text{UC}),
\text{SS})$:
%We obtain:
%$\cost(\vee^{\pro} (\wedge^{\pro}(\text{IC},\text{UC}),
%\text{SS})= 
%\vee^{\pro}_{\cost} (\wedge^{\pro}_{\cost}(\beta_{\cost}(\text{IC}),
%\beta_{\cost}(\text{UC}),\beta_{\cost}(\text{SS}))=
 %%instantiated $\vee^{\pro}$ with $\min$ and 
%%$\wedge^{\pro}$ with $+$ to yield the result: $
%\min(+(100\text{\officialeuro}, 
%200\text{\officialeuro}), 400\text{\officialeuro}) = 
%300\text{\officialeuro}$. 
\begin{align*}
\cost(\vee^{\pro} (\wedge^{\pro}(\text{IC},\text{UC}),
\text{SS}))=& 
\vee^{\pro}_{\cost} (\wedge^{\pro}_{\cost}(\beta_{\cost}(\text{IC}),
\beta_{\cost}(\text{UC}),\beta_{\cost}(\text{SS}))=\\
&\min\{+(100\text{\officialeuro}, 
200\text{\officialeuro}), 400\text{\officialeuro}\} = 300\text{\officialeuro}.
\end{align*}
\end{example}

We would like to remark that if the structure $(D,\circ,\bullet)$
forms a semi-ring, it is not necessary to prune the ADTree to correctly answer 
a question $Q$ of Class~$1$. This is due to the fact that in a semi-ring the 
neutral element\footnote{Such an element is usually called \emph{zero} of the 
semi-ring. For instance, $+\infty$ is the zero element of the semi-ring
$(\reals,\min,+)$.} for the first operator is at the same time absorbing for 
the 
second operator. 
Such element can then be assigned to all subtrees which do 
not yield a successful scenario for the owner of $Q$, in particular to the 
uncountered basic actions of the non-owner. 

%Observations: There are two ways to define a semiring: First, a semiring with 
%$0$ (neutral element), i.e., (commutative monoid for~$\opone$, monoid for 
%$\optwo$,~$\opone$ distributes over~$\optwo$ and the neutral element of 
%$\opone$ is annihilated by~$\optwo$) and second, a semiring without~$0$, i.e.,
%(a commutative semigroup, a semigroup and~$\opone$ distributes over~$\optwo$).
%Hence, in case a~$0$ (neutral element) exists, it must be an annihilator for 
%$\optwo$.~$\rightarrow$ The elements for~$\opone$: ne and~$\optwo$: ae must 
%be identical.

%A zero element is always unique and if it does not exist, it can be adjoined 
%to any semiring by adding an extra element~$0$ to~$A$ and extending~$+$ and 
%$\times$ to~$A \cup \{0\}$ by~$a + 0 = 0 + a = a$ and~$a \times 0 = 0 \times 
%a = 0$ for all~$a \in A \cup \{0\}$. 

\section{Questions Where Answers for Both Players Can Be Deduced From Each 
Other}
\label{sec:winning}

%\SM{I found this section the least convincing one until now. Not sure how
%to resolve this.}
%
%\BK{
%0. example question
%1. explain how operators are used (three of them) similar for both players
%2. Attribute Domain template + 3 parts:\\
%- notion such that value for s can be deduced from the value for 
%$\overline{s}$\\
%- op for options\\
%- op for all
%}

We illustrate the construction of the attribute domain for 
Class~$2$ using the question ``What is the success probability of a scenario,
assuming that all actions are independent?''
In case of questions of Class~$2$, values assigned to a subtree 
quantify the considered property from the point of view 
of the root player of the subtree. 
This means that, if a subtree rooted in an attack node is assigned the value 
$0.2$, 
the corresponding \emph{attack is successful} with probability $0.2$.
If a subtree rooted in a defense node is assigned the value $0.2$, 
the corresponding \emph{defensive measure is successful} with probability 
$0.2$.
Thus, in Class~$2$, conjunctive and disjunctive refinements for the 
proponent and the opponent have to be treated in the same way: 
% because 
in both cases, they 
refer to the \emph{at least one} option 
(here modeled with $\circ$) 
and the \emph{all} options (modeled with $\bullet$), of the 
player whose node is currently considered.

Questions in Class~$2$ have the property that, given a value for one 
player, we can immediately deduce a corresponding value for the other player. 
For example, if the attacker succeeds with  probability $0.2$ the 
defender succeeds with probability $0.8$. This property is modeled using
a value domain with a predefined unary negation 
operation~$\overline{\phantom{\bullet}}$. 
%Using the negation, we can then express both countermeasures 
Negation allows us to express the operators for both countermeasures
using the \emph{all} operator where the second argument is 
negated, which we represent by $\overline{\bullet}$.
%as conjunctions where the second argument is 
%negated, represented by $\overline{\bullet}$. 
Formally, $\overline{\bullet}(x,y)=x \bullet \overline{y}$.
Hence attribute domains of Class~$2$ 
follow the template
$(D,\circ,\bullet,\circ,\bullet,\overline{\bullet},\overline{\bullet})$.

%$\opthree(d,d') = \optwo(d, - d')$
%The process of converting one player's value into the 
%other player's value is called \emph{changing the point of view} of the 
%question. 

%The unary negation function also helps to explain the expression 
%\emph{current} player. Due to the conversion of the values between proponent 
%and opponent, the bottom-up algorithm always computes values from the point of 
%view of the node's type's player, the current player. 

%Questions in Class~$2$ need to contain at least three parts of information. 
Below we discuss three aspects that questions in Class~$2$ need to address.

\noindent\emph{\textbf{Notion.}}
Questions of Class~$2$ refer to notions for which the 
value domains contain a unary negation operation. This allows us to 
transform values of one player into values of the other player. 
%\SM{I don't understand the previous sentence.}
Identified notions for Class~$2$ are:
\begin{center}
\begin{minipage}[t]{3.3cm}
\begin{itemize}
\item feasibility,
\item satisfiability,
\end{itemize}
\end{minipage}
\begin{minipage}[t]{4.5cm}
\begin{itemize}
\item probability of success, 
\item probability of occurrence,
\end{itemize}
\end{minipage}
\begin{minipage}[t]{3.9cm}
\begin{itemize}
\item needs electricity.
\end{itemize}
\end{minipage} 
\end{center}
\noindent\emph{\textbf{Modality.}} Modality specifies the operator 
for \emph{at least one} option. 
For the notions enumerated above, 
this will either be the logical OR ($\vee$) or the probabilistic 
addition of independent events $P_{\cup}(A,B) = P(A) + P (B) - P(A)P(B)$, for a 
given probability distribution $P$ and events $A$ and $B$. 

\noindent\emph{\textbf{Execution.}} 
Finally, we need to know what is the execution style, so that we can 
specify the operator 
for \emph{all} 
options. In the above notions, this will either be the logical AND 
($\wedge$) or the probabilistic 
multiplication of independent events $P_{\cap}(A,B) = P(A)P(B)$.
% for a 
%given probability distribution $P$ and events $A$ and $B$. 

%
\begin{example}
We calculate
%Let us test what is
the success probability of the scenario given in 
Figure~\ref{fig:main},
assuming that all actions are independent. 
%\footnote{In 
%case an ADTree contains dependent actions, probability 
%attributes cannot be evaluated using the bottom-up procedure given 
%by~\eqref{eq:alpha(v)}.}. 
First we set the success probability of all basic actions to $\beta_{\pb} = 0.
4$
%Let us answer the success probability of a scenario given in 
%Figure~\ref{fig:main} assuming independent actions, each with a success 
%probability of $\beta = 40\%$. 
and then we use the attribute domain 
$A_{\pb}=([0,1],P_{\cup},P_{\cap},P_{\cup},P_{\cap},\overline{P_{\cap}},
\overline{P_{\cap}})$, 
where $\overline{P_{\cap}}(A,B) = P_{\cap}(A,\overline{B})$ to compute 
\begin{align*}
& P_{\cup}(P_{\cap}(\beta_{\pb}(\text{IC}),\beta_{\pb}(\text{UC})), 
\beta(\text{SS}),P_{\cap}(\beta_{\pb}(\text{OA}),1-\beta_{\pb}(\text{FW}))) =
\\ 
 &  P_{\cup}(P_{\cap}(0.4,0.4), 0.4,P_{\cap}(0.4,1-0.4))= 
 P_{\cup}(0.16, 0.4, 0.24) = 0.61696.
%.16+.4+.24-.16.4-.16.24-.4.24+.16.4.24 = 0.61696
\end{align*}
\end{example}

%\begin{table*}
%\centering
%\begin{tabular}{|p{0.4cm}|p{1.4cm}|p{1.6cm}|p{0.9cm}|p{1.2cm}|
%p{2.2cm}|p{2.4cm}|p{1.7cm}|}
%\hline  & Notion  & Modality & Owner & Execution\newline style & 
%Structure & Ba for~$
%$owner\newline ($\opone$: ne,~$\optwo$: ae) & 
%Semantics 
%\\ 
%\hline 1 & satisfiability  & min &~$\pro$ & NR & 
%$(\{0,1\},\vee,\wedge,\lnot)$ &~$\opone$:~$0$,~$\optwo$:~$0$ & p \\ 
%\hline 2 & satisfiability  & more than 1/2 &~$\pro$ & NR & 
%$(\{0,1\},\frac{\# 1}{\#}, \wedge,\lnot)$ &~$\opone$: ?,~$\optwo$:~$0$: & \ND 
%\\
%\hline 4 & satisfiability  & comb &~$\pro$ & NR & 
%$(\{0,1\},\wedge, \wedge,\lnot)$ &~$\opone$:~$1$,~$\optwo$:~$0$ & \ND \\
%\hline 5 & satisfiability  & min &~$\opp$ & NR & 
%$(\{0,1\},\wedge,\vee,\lnot)$ &~$\opone$:~$0$,~$\optwo$:~$1$ & p \\
%\hline 6 & satisfiability  & more than 1/2 &~$\opp$ & NRS & 
%$(\{0,1\},\wedge,\frac{\# 1}{\#},\lnot)$ &~$\opone$:~$0$,~$\optwo$: ? & \ND \\
%\hline 8 & satisfiability  & comb &~$\opp$ & NR & 
%$(\{0,1\},\wedge,\wedge,\lnot)$ &~$\opone$:~$1$,~$\optwo$:~$0$ & \ND \\
%\hline 
%\end{tabular} 
%\caption{What determines the structure in Class~$2$?}
%\label{tab:determination_structure_sat}
%\end{table*}

\section{Questions Relating to an Outside Third Party}
\label{sec:environmental}
%
%\BK{
%0. example question
%1. explain how operators are used (two of them) similar for both players
%2. ADomain template + 3 parts:\\
%- notion used by both players/covering both players\\
%- op for options\\
%- op for all
%}

%\BK{question}
Suppose an outsider is interested in the overall maximal power consumption 
of the scenario. As in the previous section, disjunctive refinements of both 
players should be treated with one operator and conjunctive refinements of both 
players with another operator. Indeed, for a third party 
the important information is whether 
\emph{all} or \emph{at least one} option need 
to be executed and not who performs the actions. 
%Thus, the general template for an attribute domain of Class~$3$ is 
%again $(D, \circ,\bullet, \circ,\bullet,\diamond,\diamond)$.
%contrary to the situation in Class~$1$, there exists 
%no notion of an owner. 
%proponent's and opponent's values are aggregated. 
Also countermeasures lose their opposing aspect and their values 
are  aggregated in the same way as conjunctive refinements. Regarding 
the question, this is plausible since both the countered and the countering 
action contribute to the overall power consumption. These 
observations result in the following template for an attribute domain in 
Class~$3$: $(D,\circ,\bullet,\circ,\bullet,\bullet,\bullet)$.

We specify relevant parts of the questions in Class~$3$ on the following example.

\medskip
\begin{tabular}{lll}
\emph{Modality:} & & What is the maximal \\
\emph{Notion:} & & energy consumption \\
\emph{Execution:} & & knowing that sharing of power is impossible?
\end{tabular}
\medskip

\noindent\emph{\textbf{Notion.}} In Class~$3$, we use notions 
that express universal properties covering both players. 
Found examples are:
\begin{center}
\begin{minipage}[t]{3.2cm}
\begin{itemize}
\item social costs,
\item global costs, 
\item third party costs,
\end{itemize}
\end{minipage} 
\begin{minipage}[t]{4.2cm}
\begin{itemize}
\item environmental costs,
\item environmental damage,
\item information flow,
\end{itemize}
\end{minipage} 
\begin{minipage}[t]{4.5cm}
\begin{itemize}
\item combined execution time,
\item required network traffic,
\item energy consumption.
\end{itemize}
\end{minipage}
\end{center}
\noindent\emph{\textbf{Modality.}} 
The question should also contain enough information 
to allow us to specify how to deal with \emph{at least one} option. In general, modalities 
used in Class~$3$ are the same as those in Class~$1$, e.g., 
minimal, maximal and average.

\noindent\emph{\textbf{Execution.}} 
Finally, we need to know what is the execution 
style, so that we can  define the correct operator for \emph{all} options. 
The choices for execution style in Class~$3$ are again the same as in 
Class~$1$.

These three parts now straightforwardly define an algebraic structure 
$(D,\circ,\bullet)$ that we use to construct the attribute domain 
$(D,\circ,\bullet,\circ,\bullet,\bullet,\bullet)$.
\begin{example}
Consider the question ``What is the maximal energy consumption for the scenario 
depicted in Figure~\ref{fig:main}, knowing that sharing of power is 
impossible?''
Both, the proponent's as well as the opponent's actions may require energy. 
We assume that being ``Internally 
Connected'', performing an 
``Outsider Attack'' and running a ``Firewall'' all consume 20\si{kWh}.
Obtaining ``User Creds'' requires 1\si{kWh}, whereas ``Stealing Server'' does
not require any energy. These numbers constitute the basic assignment 
for the considered attribute. From the question we know that, 
when we have a choice, we should consider the option which consumes the most 
energy. Furthermore, since sharing of power is impossible, 
%(for example due to separate electricity grids),
values for actions which require execution of several subactions should be 
added.
Thus, we use the attribute domain 
$A_{\energy_{\max}}=(\reals, \max,+,\max,+,+,+)$ and compute the maximal 
possible energy consumption in 
the scenario as
\begin{align*}
&\energy_{\max}((\vee^{\pro} (\wedge^{\pro}(\text{IC},\text{UC}),
\text{SS}))=\\
&\max\{+(20\si{kWh},1\si{kWh}),0\si{kWh},+(20\si{kWh},20\si{kWh})\}= 
40\si{kWh}.
\end{align*} 
\end{example}

Due to similarities for modality and execution style 
for questions of Class~$1$ and Class~$3$, 
we can make use of Table~\ref{tab:determination_structure}, to choose the 
structure $(D,\circ,\bullet)$ that determines an attribute domain 
for a question of Class~$3$. The table corresponds to the case where
the owner is the proponent.  

\section{Methodological Advancements for Attack Trees}
\label{sec:attack_trees}
ADTrees extend the well-known formalism of attack trees~\cite{Schn} by 
incorporating defensive measures to the model. Hence, 
every attack tree is in particular an ADTree. 
As visible in Example~\ref{eg:imprecise}, 
underspecified questions are not a new 
phenomenon of ADTrees, but already occur in the case of pure attack trees. 
%Since every attack tree is in particular an ADTree, 
Thus, the formalization of quantitative questions, proposed in this paper,
is not only useful in the attack--defense tree methodology but, 
more importantly, it 
%also 
helps users of the more widely spread formalism of attack trees. 

Given a well specified question on ADTrees and the corresponding 
attribute domain, we can answer the question 
on attack trees as well. 
Formally, attack trees are represented with terms involving only operators 
$\vee^{\pro}$ and $\wedge^{\pro}$.
If 
$A_{\alpha}=(D_{\alpha}, \vee^{\pro}_{\alpha}, \wedge^{\pro}_{\alpha},
\vee^{\opp}_{\alpha}, \wedge^{\opp}_{\alpha}, \counter^{\pro}_{\alpha}, 
\counter^{\opp}_{\alpha})$ is an attribute domain for ADTerms, 
the corresponding attribute domain for attack trees
is $A_{\alpha}=(D_{\alpha}, \vee^{\pro}_{\alpha}, \wedge^{\pro}_{\alpha})$, 
which corresponds to the formalization introduced in~\cite{MaOo}.
Furthermore, 
%We would like to point out that, 
due to the fact that 
attack trees involve only one player
(the attacker), the notions of attacker, proponent, 
and question's owner coincide in this simplified model. 
This in turn implies that, 
in the case of attack trees, 
the three classes of questions considered in this paper 
form in fact one class. 

%\subsection{Illustration/validation of the paper results in the ADTool}
\section{Prototype Tool}
\label{sec:tool}
In order to automate the analysis of security scenarios using the 
attack--defense methodology, we have developed  a prototype 
software tool, called  \emph{ADTool}. It is written in Java and is 
compatible with multiple platforms 
(Windows, Linux, MAC OS). ADTool is publicly available~\cite{ADTool}. Its main 
functionalities include the possibility of 
creation and modification of ADTree and ADTerm models as well as attributes 
evaluation on ADTrees.

ADTool combines the features offered by graphical tree representations with 
mathematical functionalities provided by ADTerms and attributes. The user can
choose whether to work with intuitive ADTrees or with formal ADTerms. 
When one of these models is created or modified, the other one is generated 
automatically. The possibility of modular display of ADTrees makes ADTool 
suitable for dealing with large industrial case studies which may correspond 
to very complex scenarios and may require large models. 

The software supports attribute evaluation on ADTrees, as presented in this 
paper. A number of predefined attribute domains allow the user to answer 
questions of Classes~$1$, $2$ and $3$. Implemented attributes include:
costs, satisfiability, time and skill level, for various owners, 
modalities and execution styles; scenario's 
satisfiability and success probability;  
reachability of the root goal in less than $x$  
minutes, where $x$ can be customized by the user; 
and the maximal energy consumption.

%minimal costs of the proponent/opponent, the minimal skill of the winner, the 
%proponent's/opponent's/scenario's 
%satisfiability, the reachability of the goal in less than 10  
%minutes and the minimal attack time.

\section{Conclusions}
\label{sec:conclusion}

A useful feature of the attack--defense tree methodology is that it combines 
an intuitive representation and algorithms with formal mathematical modeling. 
In practice we model attack--defense scenarios in a graphical way 
and we ask intuitive questions about aspects and properties that we are 
interested in. To formally analyze the scenarios, we employ attack--defense 
terms and attribute domains. In this paper, we have guided the user in how to 
properly formulate a quantitative question on an ADTree and how to then 
construct the corresponding attribute domain. Since attack trees are a 
subclass of attack--defense trees, our results also 
advance the practical use of quantitative analysis of attack trees.

%In the future we plan to enlarge the class of formally treatable questions by 
%including additional information \SM{Such as?} that we may have on the 
%involved players. 
%\PS{How about just removing the previous sentence?}

%\BK{comment \ref{application}:
We are currently applying the approach presented in this paper
to analyze socio-technical weaknesses of real-life scenarios, 
such as Internet web filtering, 
which involve trade offs between security and usability.
In the future, we also plan to investigate the relation between 
attribute domains of all 
three classes and the problem of equivalent representations of the same 
scenario, formalized in~\cite{KoMaRaSc2}. 
%We also plan to merge the 
%ADTree methodology with Bayesian networks, to make probabilistic reasoning 
%about dependent actions possible. 
%\BK{I would remove the last sentence. This direction has nothing to do with 
%the current paper.}

\paragraph{Acknowledgments:} 
We would like to thank Piotr Kordy for his contributions to the development of 
ADTool. This work was supported by the Fonds National de la Recherche 
Luxembourg under the grants C$08$/IS/$26$ and PHD-$09$-$167$.

%\bibliographystyle{IEEEtran}
%\bibliography{IEEEabrv,../../Attack_Trees}

\bibliographystyle{splncs03}

\end{document}